\begin{document}
\newtheorem{theorem}{Theorem}
\newtheorem{corollary}{Corollary}
\newtheorem{condition}{Condition}
\newtheorem{lemma}{Lemma}
\theoremstyle{definition}
\newtheorem{example}{Example}
\theoremstyle{definition}
\newtheorem{definition}{Definition}
\theoremstyle{definition}
\newtheorem{remark}{Remark}
\newcommand{\xE}{\textrm{E}}
\newcommand{\xH}{\mathbf{H}}
\newcommand{\xI}{\mathbf{I}}
\newcommand{\xu}{\mathbf{u}}
\newcommand{\xU}{\mathbf{U}}
\newcommand{\xv}{\mathbf{v}}
\newcommand{\xV}{\mathbf{V}}
\newcommand{\xY}{\mathbf{Y}}
\newcommand{\xX}{\mathbf{X}}
\newcommand{\xZ}{\mathbf{Z}}
\newcommand{\xdeg}{\textrm{deg}}

%----------------------------------------------------------------------
% Title Information, Abstract and Keywords
%----------------------------------------------------------------------
\title{On Feasibility of Interference Alignment in MIMO Interference Networks}

\author{Cenk M. Yetis$^\ast$, \IEEEmembership{Member, IEEE,
} Tiangao Gou, \IEEEmembership{Student Member, IEEE, }\\ Syed A. Jafar
\IEEEmembership{Senior Member, IEEE} and Ahmet H. Kayran,
\IEEEmembership{Senior Member, IEEE}
\thanks{C. M. Yetis is with Informatics Institute, Satellite Communications and Remote Sensing, Istanbul Technical University,
      Maslak, Istanbul, 34469, TURKEY. Email: cenkmyetis@yahoo.com. The author is supported in
      part by The Scientific and Technological Research Council of
      Turkey (TUBITAK). The author is on leave at University of California Irvine.}
\thanks{T. Gou and S. A. Jafar are with Electrical Engineering and Computer Science, University of California
Irvine, Irvine, CA, 92697, USA. Email: \{tgou,syed\}@uci.edu}
\thanks{A. H. Kayran is with Department of Electronics and Communications Engineering, Istanbul Technical University,
      Maslak, Istanbul, 34469, TURKEY.
      Email: kayran@itu.edu.tr.}}

 \maketitle
\begin{abstract}
We explore the feasibility of interference alignment in signal
vector space -- based only on beamforming -- for $K\text{-user}$
MIMO interference channels. Our main contribution is to relate the
feasibility issue to the problem of determining the solvability of a
multivariate polynomial system, considered extensively in algebraic
geometry. It is well known, e.g. from Bezout's theorem, that generic
polynomial systems are solvable if and only if the number of
equations does not exceed the number of variables. Following this
intuition, we classify signal space interference alignment problems
as either proper or improper based on the number of equations and
variables. Rigorous connections between feasible and proper systems
are made through Bernshtein's theorem for the case where each
transmitter uses only one beamforming vector. The \mbox{multi-beam}
case introduces dependencies among the coefficients of a polynomial
system so that the system is no longer generic in the sense required
by both theorems. In this case, we show that the connection between
feasible and proper systems can be further strengthened (since the
equivalency between feasible and proper systems does not always
hold) by including standard information theoretic outer bounds in
the feasibility analysis.
\end{abstract}

\begin{keywords}
Degrees of freedom,  interference alignment, interference channel, MIMO, Newton polytopes, mixed volume
\end{keywords}

\section{Introduction}

The degrees of freedom (DoF) of wireless interference networks
represent the number of interference-free signaling-dimensions in
the network. In a network with $K$ transmitters and $K$ receivers
and non-degenerate channel conditions, it is well known that $K$
non-interfering spatial signaling dimensions can be created if the
transmitters or the receivers are able to jointly process their
signals. Until recently it was believed that with distributed
processing at transmitters and receivers, it is not possible to
resolve these signaling dimensions so that only one degree of
freedom is available. However, the discovery of a new idea called
interference alignment has shown that the DoF of wireless
interference networks can be much higher \cite{85}.

\subsection{Evolution of Interference Alignment}
Interference alignment refers to the consolidation of multiple
interfering signals into a small subspace at each receiver so that
the number of interference-free dimensions remaining for the desired
signal can be maximized. The idea evolved out of the DoF studies for
the 2-user X channel \cite{88,113} and has since been applied to a
variety of networks in increasingly sophisticated forms. The
majority of interference alignment schemes proposed so far, fall
into one of two categories -- (1) signal space alignment and (2)
signal level alignment.
\subsubsection{Interference Alignment in Signal Vector Space}
\par  The potential for
overlapping interference spaces was first pointed out by Maddah-Ali
et. al. in \cite{123,111} where iterative schemes were formulated
for optimizing transmitters and receivers in conjunction with dirty
paper coding/successive decoding schemes. The idea of interference
alignment was crystallized in a report by Jafar  \cite{125} where
the first  explicit (closed form, non-iterative) and linear (no
successive-decoding or dirty paper coding) interference alignment
scheme in signal vector space was presented. The explicit linear
approach introduced by Jafar in \cite{125} was adopted by Maddah-Ali
et. al. in their subsequent report and journal paper \cite{112,
113}, while \cite{125} developed into the journal paper by Jafar and
Shamai \cite{88}. Interference alignment was also independently
discovered by Weingarten et. al. \cite{104} in the context of the
compound multiple input single output (MISO) broadcast channel (BC).

Following the early success on the X channel and the compound MISO
BC, signal space interference alignment schemes were introduced for
the $K\text{-user}$ interference channel with equal (unequal) number
of antennas at all transmitters and receivers by Cadambe and Jafar
(Gou and Jafar) in \cite{85} (\hspace{-0.025cm}\cite{89}), for X
networks with arbitrary number of users by Cadambe and Jafar in
\cite{99}, for cellular networks by Suh and Tse in \cite{105}, for
MIMO bidirectional relay networks (Y channel) by Lee and Lim in
\cite{107}, for ergodic fading interference networks by Nazer et.
al. in \cite{106}, and for interference networks with secrecy
constraints in \cite{126}. Interference networks with constant
channel coefficients posed a barrier for signal space interference
alignment schemes because they did not provide distinct rotations of
vector spaces on each link that were needed for linear interference
alignment. The problem was circumvented to a certain extent for
complex interference channels in \cite{100}, where phase rotations
were exploited in a similar manner through the use of asymmetric
complex signaling. However, for constant and \emph{real} channel
coefficients, these linear alignment schemes were not sufficient and
a different class of alignment schemes based on structured (e.g.
lattice) codes that align interference in signal scale were
introduced.

\subsubsection{Interference Alignment in Signal Scale}
The first interference alignment scheme in signal scale was
introduced for the many-to-one interference channel by Bresler et.
al. in \cite{108} and for fully connected interference networks  by Cadambe
et. al. in \cite{109}. Unlike random codes for which decoding the
\emph{sum} of interfering signals is equivalent to decoding each of
the interfering signals, these schemes rely on codewords with a
lattice structure, which opens the possibility that the sum of
interfering signals can be decoded even when the individual
interfering signals are not decodable. This is because the sum of
lattice points is another lattice point, and hence may be decoded as
a valid codeword. Lattice based alignment schemes were further
investigated for interference networks by Sridharan et. al. in
\cite{102,103} and for networks with secrecy constraints by He and
Yener in \cite{117}. An interesting interference alignment scheme in
signal scale was introduced by Etkin and Ordentlich in \cite{101}.
This work used fundamental results from diophantine approximation
theory to prove that the rational and irrational scaled versions of
a lattice ``stood apart'' from each other, and thus could be
separated. The result was extended to almost all irrational numbers
by Maddah-Ali et. al. in \cite{113} by translating the notion of
linear independence (exploited in linear interference alignment
schemes) into the notion of \emph{rational independence} in signal
scale. With this new insight, the asymptotic alignment scheme of Cadambe
and Jafar from \cite{85} was essentially adopted in \cite{113} to
achieve interference alignment in signal scale and following the
approach in \cite{85}, was shown to approach the DoF outer bound.

In spite of the obvious advantages of signal scale alignment schemes
(especially those based on rational independence \cite{101,113}) for
obtaining DoF characterizations,  a downside to these schemes is
that they seem to bring to light primarily the artifacts of the
infinite SNR regime and offer  little in terms  of useful insights
for the practical setting with finite SNR and finite precision
channel knowledge, where the notion of rational independence loses
its relevance. Signal space alignment schemes on the other hand, are
desirable both for their analytical tractability as well as the
useful insights they offer for finite SNR regime  where they may be
naturally combined with selfish approaches \cite{124}. Within the
class of signal vector space interference alignment schemes,
alignment in spatial dimension through multiple antennas (MIMO) is
found to be more robust to practical limitations such as frequency
offsets than alignment in time or frequency dimensions \cite{110}.
However, the feasibility of linear interference alignment for
general MIMO interference networks remains an open problem
\cite{55}. It is this problem - the feasibility of linear
interference alignment for MIMO interference networks - that we
address in this paper. We explain our objective through the
following examples.

\subsection{The Feasibility Question - Examples}
\subsubsection{Symmetric Systems}
\par Let $(M\times N, d)^K$ denote the $K\text{-user}$ MIMO
interference network, where every transmitter has $M$ antennas,
every receiver has $N$ antennas and each user wishes to achieve $d$
DoF. We call such a system a symmetric system. Consider the
following examples.

\begin{itemize}
\item $(2\times 2, 1)^3$ - It is shown in \cite{85} that for the 3-user interference
network with 2 antennas at each node, each user can achieve 1 DoF by
presenting a closed form solution for linear interference alignment,
i.e., by linear beamforming at the transmitters and linear combining
at the receivers. However, is there a way to analytically determine
the feasibility of this system without finding a closed form
solution?
\item $(5\times 5,2)^4$ - Consider the 4-user interference network with 5 antennas at each
user and we wish to achieve 2 DoF per user for a total of 8 DoF. A
theoretical solution to this problem is not known but numerical
evidence in \cite{55} clearly indicates that a linear interference
alignment solution exists. Numerical algorithms are one way to
determine the feasibility of linear interference alignment. However,
is there a way to analytically determine the feasibility of
alignment without running the numerical simulation?
\end{itemize}

\subsubsection{Asymmetric Systems}
Let us introduce the notation $\left(M^{[1]}\times N^{[1]}, d^{[1]}\right)\cdots(M^{[K]}\times
N^{[K]}, d^{[K]})$ to denote the $K\text{-user}$ MIMO interference
network, where the $k^{th}$ transmitter and receiver have $M^{[k]}$
and $N^{[k]}$ antennas, respectively and the $k^{th}$ user demands
$d^{[k]}$ DoF. We call such a system an asymmetric system. Consider
the following examples.

\begin{itemize}
\item Consider the simple system $(2\times 1, 1)^2$,
which is clearly feasible (proper) because simple zero-forcing is
enough for achievability. However, now consider the ${(2\times 1,
1)(1\times 2,1)}$ system, where the same total number of DoF is
desired. Although these systems have the same number of total
antennas, is the latter system still achievable?
\item Consider the 2-user interference network ${(2\times 3, 1)(3\times 2,
1)}$, where a total of 2 DoF is desired. The achievable scheme for
this system is presented in \cite{93}. Now, consider the same scheme
with increased number of users; that is, the 4-user interference
network $(2\times 3,1)^2(3\times 2,1)^2$, where a total of 4 DoF is
desired. Is this system still achievable, where DoF is doubled by
simply going from two users to four users?
\end{itemize}

In this paper, we address all these questions. Our approach is
to consider the signal space interference alignment problem as the
solvability of a multivariate polynomial system, and then place it into perspective with classical results in algebraic geometry where these problems are extensively studied.

\section{Preliminaries}\label{sec:Basics}

\subsection{System Model}

We consider the same $K\text{-user}$ MIMO interference network as
considered in \cite{55}. The received signal at the $n^{th}$ channel
use can be written as follows:

\begin{eqnarray*}
\xY^{[k]}(n)=\sum_{l=1}^K\xH^{[kl]}(n)\xX^{[l]}(n)+\xZ^{[k]}(n),
~~
\end{eqnarray*}
$\forall k\in\mathcal{K}\triangleq\{1,2,...,K\}$. Here,
$\xY^{[k]}(n) \textrm{ and } \xZ^{[k]}(n)$ are the $N^{[k]}\times 1$
received signal vector and the zero mean unit variance circularly
symmetric additive white Gaussian noise vector (AWGN) at the
$k^{th}$ receiver, respectively. $\xX^{[l]}(n)$ is the
$M^{[l]}\times 1$ signal vector transmitted from the $l^{th}$
transmitter and $\xH^{[kl]}(n)$ is the $N^{[k]}\times M^{[l]}$
matrix of channel coefficients between the $l^{th}$ transmitter and
the $k^{th}$ receiver. $\xE[||\xX^{[l]}(n)||^2]=P^{[l]}$ is the
transmit power of the $l^{th}$ transmitter. Hereafter, we omit the
channel use index $n$ for the sake of simplicity. The DoF for the
$k^{th}$ user's message is denoted by
$d^{[k]}\leq\min(M^{[k]},N^{[k]})$.

As defined earlier, $\left(M\times N, d\right)^K$ denotes the
$K\text{-user}$ symmetric MIMO interference network, where each
transmitter and receiver has $M$ and $N$ antennas, respectively and
each user demands $d$ DoF; therefore, the total DoF demand is $Kd$.
In general, let $\Pi_{k=1}^K\left(M^{[k]}\times N^{[k]},
d^{[k]}\right)=\left(M^{[1]}\times N^{[1]},
d^{[1]}\right)\cdots(M^{[K]}\times N^{[K]}, d^{[K]})$ denote the
$K\text{-user}$ asymmetric MIMO interference network, where the
$k^{th}$ transmitter and receiver have $M^{[k]}$ and $N^{[k]}$
antennas, respectively and the $k^{th}$ user demands $d^{[k]}$ DoF.
Some sample symmetric and asymmetric systems are shown in Fig.
\ref{fig:Figures}. \psset{unit=0.8cm}
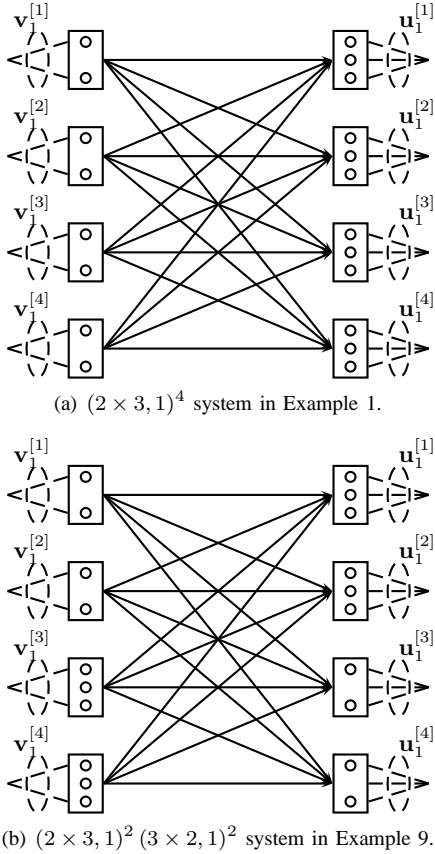
\begin{figure}[!t]
\centering \subfigure[$\left(2\times3,1\right)^4$ system in Example
\ref{ex:2x3,1}.] % caption for subfigure a
{
    \label{fig:Fig1a}
% \begin{pspicture}(0,-1.4)(4,5)
%\psgrid % For two columns
%\hspace{1.75cm} % For one column
\begin{pspicture}(0,0)(7,6.4) % For one column
%\psgrid
\psframe(1,4.8)(1.6,5.8) \pscircle(1.3,5){.1} \pscircle(1.3,5.6){.1}
\pnode(0,5.3){Tx1}
\pnode(1,5){Tx1a}\pnode(1,5.6){Tx1b}\ncline[linestyle=dashed]{Tx1}{Tx1a}\ncline[linestyle=dashed]{Tx1}{Tx1b}
\pnode(1.6,5.3){Tx1c}\psellipse[linestyle=dashed](.5,5.3)(.2,.5)
\rput[l](.1,6){\small{$\xv^{[1]}_1$}}

\psframe(5.4,4.8)(6,5.8) \pscircle(5.7,5){.1} \pscircle(5.7,5.3){.1}
\pscircle(5.7,5.6){.1} \pnode(7,5.3){Rx1}
\pnode(6,5){Rx1a}\pnode(6,5.3){Rx1b}\pnode(6,5.6){Rx1c}\ncline[linestyle=dashed]{Rx1}{Rx1a}\ncline[linestyle=dashed]{Rx1}{Rx1b}\ncline[linestyle=dashed]{Rx1}{Rx1c}
\pnode(5.4,5.3){Rx1d} \psellipse[linestyle=dashed](6.5,5.3)(.2,.5)
\rput[l](6.5,6){\small{$\xu^{[1]}_1$}}

\psframe(1,3.2)(1.6,4.2) \pscircle(1.3,3.4){.1} \pscircle(1.3,4){.1}
\pnode(0,3.7){Tx2}
\pnode(1,3.4){Tx2a}\pnode(1,4){Tx2b}\ncline[linestyle=dashed]{Tx2}{Tx2a}\ncline[linestyle=dashed]{Tx2}{Tx2b}
\pnode(1.6,3.7){Tx2c}\psellipse[linestyle=dashed](.5,3.7)(.2,.5)
\rput[l](.1,4.4){\small{$\xv^{[2]}_1$}}

\psframe(5.4,3.2)(6,4.2) \pscircle(5.7,3.4){.1}
\pscircle(5.7,3.7){.1} \pscircle(5.7,4){.1} \pnode(7,3.7){Rx2}
\pnode(6,3.4){Rx2a}\pnode(6,3.7){Rx2b}\pnode(6,4){Rx2c}\ncline[linestyle=dashed]{Rx2}{Rx2a}\ncline[linestyle=dashed]{Rx2}{Rx2b}\ncline[linestyle=dashed]{Rx2}{Rx2c}
\pnode(5.4,3.7){Rx2d} \psellipse[linestyle=dashed](6.5,3.7)(.2,.5)
\rput[l](6.5,4.4){\small{$\xu^{[2]}_1$}}

\psframe(1,1.6)(1.6,2.6) \pscircle(1.3,1.8){.1}
\pscircle(1.3,2.4){.1} \pnode(0,2.1){Tx3}
\pnode(1,1.8){Tx3a}\pnode(1,2.4){Tx3b}\ncline[linestyle=dashed]{Tx3}{Tx3a}\ncline[linestyle=dashed]{Tx3}{Tx3b}
\pnode(1.6,2.1){Tx3c}\psellipse[linestyle=dashed](.5,2.1)(.2,.5)
\rput[l](.1,2.8){\small{$\xv^{[3]}_1$}}

\psframe(5.4,1.6)(6,2.6) \pscircle(5.7,1.8){.1}
\pscircle(5.7,2.1){.1} \pscircle(5.7,2.4){.1} \pnode(7,2.1){Rx3}
\pnode(6,1.8){Rx3a}\pnode(6,2.1){Rx3b}\pnode(6,2.4){Rx3c}\ncline[linestyle=dashed]{Rx3}{Rx3a}\ncline[linestyle=dashed]{Rx3}{Rx3b}\ncline[linestyle=dashed]{Rx3}{Rx3c}
\pnode(5.4,2.1){Rx3d} \psellipse[linestyle=dashed](6.5,2.1)(.2,.5)
\rput[l](6.5,2.8){\small{$\xu^{[3]}_1$}}

\psframe(1,0)(1.6,1) \pscircle(1.3,.2){.1} \pscircle(1.3,.8){.1}
\pnode(0,.5){Tx4}
\pnode(1,.2){Tx4a}\pnode(1,.8){Tx4b}\ncline[linestyle=dashed]{Tx4}{Tx4a}\ncline[linestyle=dashed]{Tx4}{Tx4b}
\pnode(1.6,.5){Tx4c}\psellipse[linestyle=dashed](.5,.5)(.2,.5)
\rput[l](.1,1.2){\small{$\xv^{[4]}_1$}}

\psframe(5.4,0)(6,1) \pscircle(5.7,.2){.1} \pscircle(5.7,.5){.1}
\pscircle(5.7,.8){.1} \pnode(7,.5){Rx4}
\pnode(6,.2){Rx4a}\pnode(6,.5){Rx4b}\pnode(6,.8){Rx4c}\ncline[linestyle=dashed]{Rx4}{Rx4a}\ncline[linestyle=dashed]{Rx4}{Rx4b}\ncline[linestyle=dashed]{Rx4}{Rx4c}
\pnode(5.4,.5){Rx4d} \psellipse[linestyle=dashed](6.5,.5)(.2,.5)
\rput[l](6.5,1.2){\small{$\xu^{[4]}_1$}}

\ncline{->}{Tx1c}{Rx1d}\ncline{->}{Tx1c}{Rx2d}\ncline{->}{Tx1c}{Rx3d}\ncline{->}{Tx1c}{Rx4d}
\ncline{->}{Tx2c}{Rx2d}\ncline{->}{Tx2c}{Rx1d}\ncline{->}{Tx2c}{Rx3d}\ncline{->}{Tx2c}{Rx4d}
\ncline{->}{Tx3c}{Rx3d}\ncline{->}{Tx3c}{Rx1d}\ncline{->}{Tx3c}{Rx2d}\ncline{->}{Tx3c}{Rx4d}
\ncline{->}{Tx4c}{Rx4d}\ncline{->}{Tx4c}{Rx1d}\ncline{->}{Tx4c}{Rx2d}\ncline{->}{Tx4c}{Rx3d}
\end{pspicture}
} \centering
\subfigure[$\left(2\times3,1\right)^2\left(3\times2,1\right)^2$ system in Example \ref{ex:2x3,3x2}.] % caption for subfigure a
{
    \label{fig:Fig1b}
% \begin{pspicture}(0,-1.4)(4,5) %\psgrid % For two columns
%\hspace{1.75cm} % For one column
\begin{pspicture}(0,0)(7,6.4) % For one column
%\psgrid

\psframe(1,4.8)(1.6,5.8) \pscircle(1.3,5){.1} \pscircle(1.3,5.6){.1}
\pnode(0,5.3){Tx1}
\pnode(1,5){Tx1a}\pnode(1,5.6){Tx1b}\ncline[linestyle=dashed]{Tx1}{Tx1a}\ncline[linestyle=dashed]{Tx1}{Tx1b}
\pnode(1.6,5.3){Tx1c}\psellipse[linestyle=dashed](.5,5.3)(.2,.5)
\rput[l](.1,6){\small{$\xv^{[1]}_1$}}

\psframe(5.4,4.8)(6,5.8) \pscircle(5.7,5){.1} \pscircle(5.7,5.3){.1}
\pscircle(5.7,5.6){.1} \pnode(7,5.3){Rx1}
\pnode(6,5){Rx1a}\pnode(6,5.3){Rx1b}\pnode(6,5.6){Rx1c}\ncline[linestyle=dashed]{Rx1}{Rx1a}\ncline[linestyle=dashed]{Rx1}{Rx1b}\ncline[linestyle=dashed]{Rx1}{Rx1c}
\pnode(5.4,5.3){Rx1d} \psellipse[linestyle=dashed](6.5,5.3)(.2,.5)
\rput[l](6.5,6){\small{$\xu^{[1]}_1$}}

\psframe(1,3.2)(1.6,4.2) \pscircle(1.3,3.4){.1} \pscircle(1.3,4){.1}
\pnode(0,3.7){Tx2}
\pnode(1,3.4){Tx2a}\pnode(1,4){Tx2b}\ncline[linestyle=dashed]{Tx2}{Tx2a}\ncline[linestyle=dashed]{Tx2}{Tx2b}
\pnode(1.6,3.7){Tx2c}\psellipse[linestyle=dashed](.5,3.7)(.2,.5)
\rput[l](.1,4.4){\small{$\xv^{[2]}_1$}}

\psframe(5.4,3.2)(6,4.2) \pscircle(5.7,3.4){.1}
\pscircle(5.7,3.7){.1} \pscircle(5.7,4){.1} \pnode(7,3.7){Rx2}
\pnode(6,3.4){Rx2a}\pnode(6,3.7){Rx2b}\pnode(6,4){Rx2c}\ncline[linestyle=dashed]{Rx2}{Rx2a}\ncline[linestyle=dashed]{Rx2}{Rx2b}\ncline[linestyle=dashed]{Rx2}{Rx2c}
\pnode(5.4,3.7){Rx2d} \psellipse[linestyle=dashed](6.5,3.7)(.2,.5)
\rput[l](6.5,4.4){\small{$\xu^{[2]}_1$}}

\psframe(1,1.6)(1.6,2.6) \pscircle(1.3,1.8){.1}
\pscircle(1.3,2.1){.1} \pscircle(1.3,2.4){.1} \pnode(0,2.1){Tx3}
\pnode(1,1.8){Tx3a}\pnode(1,2.4){Tx3b}\ncline[linestyle=dashed]{Tx3}{Tx3a}\ncline[linestyle=dashed]{Tx3}{Tx3b}
\pnode(1.6,2.1){Tx3c}\psellipse[linestyle=dashed](.5,2.1)(.2,.5)
\rput[l](.1,2.8){\small{$\xv^{[3]}_1$}}

\psframe(5.4,1.6)(6,2.6) \pscircle(5.7,1.8){.1}
\pscircle(5.7,2.4){.1} \pnode(7,2.1){Rx3}
\pnode(6,1.8){Rx3a}\pnode(6,2.1){Rx3b}\pnode(6,2.4){Rx3c}\ncline[linestyle=dashed]{Rx3}{Rx3a}\ncline[linestyle=dashed]{Rx3}{Rx3b}\ncline[linestyle=dashed]{Rx3}{Rx3c}
\pnode(5.4,2.1){Rx3d} \psellipse[linestyle=dashed](6.5,2.1)(.2,.5)
\rput[l](6.5,2.8){\small{$\xu^{[3]}_1$}}

\psframe(1,0)(1.6,1) \pscircle(1.3,.2){.1}\pscircle(1.3,.5){.1}
\pscircle(1.3,.8){.1} \pnode(0,.5){Tx4}
\pnode(1,.2){Tx4a}\pnode(1,.8){Tx4b}\ncline[linestyle=dashed]{Tx4}{Tx4a}\ncline[linestyle=dashed]{Tx4}{Tx4b}
\pnode(1.6,.5){Tx4c}\psellipse[linestyle=dashed](.5,.5)(.2,.5)
\rput[l](.1,1.2){\small{$\xv^{[4]}_1$}}

\psframe(5.4,0)(6,1) \pscircle(5.7,.2){.1} \pscircle(5.7,.8){.1}
\pnode(7,.5){Rx4}
\pnode(6,.2){Rx4a}\pnode(6,.5){Rx4b}\pnode(6,.8){Rx4c}\ncline[linestyle=dashed]{Rx4}{Rx4a}\ncline[linestyle=dashed]{Rx4}{Rx4b}\ncline[linestyle=dashed]{Rx4}{Rx4c}
\pnode(5.4,.5){Rx4d} \psellipse[linestyle=dashed](6.5,.5)(.2,.5)
\rput[l](6.5,1.2){\small{$\xu^{[4]}_1$}}

\ncline{->}{Tx1c}{Rx1d}\ncline{->}{Tx1c}{Rx2d}\ncline{->}{Tx1c}{Rx3d}\ncline{->}{Tx1c}{Rx4d}
\ncline{->}{Tx2c}{Rx2d}\ncline{->}{Tx2c}{Rx1d}\ncline{->}{Tx2c}{Rx3d}\ncline{->}{Tx2c}{Rx4d}
\ncline{->}{Tx3c}{Rx3d}\ncline{->}{Tx3c}{Rx1d}\ncline{->}{Tx3c}{Rx2d}\ncline{->}{Tx3c}{Rx4d}
\ncline{->}{Tx4c}{Rx4d}\ncline{->}{Tx4c}{Rx1d}\ncline{->}{Tx4c}{Rx2d}\ncline{->}{Tx4c}{Rx3d}
\end{pspicture}
}
 \caption{Sample symmetric and asymmetric systems.}
 \label{fig:Figures}
\end{figure}

\subsection{Interference Alignment in Signal Space - Beamforming and Zero Forcing Formulation}\label{subsec:recalign}

In interference alignment precoding, the transmitted signal from the
$k^{th}$ user is $\xX^{[k]}=\xV^{[k]}\tilde{\xX}^{[k]}$, where
$\tilde{\xX}^{[k]}$ is a $d^{[k]}\times 1$ vector that denotes the
$d^{[k]}$ independently encoded streams transmitted from the
$k^{th}$ user. The $M^{[k]}\times d^{[k]}$ precoding (beamforming) filters
$\xV^{[k]}$ are designed to maximize the overlap of interference
signal subspaces at each receiver while ensuring that the desired
signal vectors at each receiver are linearly independent of the
interference subspace. Therefore, each receiver can zero-force all
the interference signals without zero-forcing any of the desired
signals. The zero-forcing filters at the receiver are denoted by
$\xU^{[k]}$. In \cite{55}, it is shown that an interference
alignment solution requires the simultaneous satisfiability of the
following conditions:

\begin{eqnarray}
\hspace{-.4cm}\xU^{[k]\dagger}\xH^{[kj]}\xV^{[j]} &=& 0, \forall j\neq k \textrm{ ~~~and }\label{eqn:condition1}\\
\hspace{-.4cm}\mbox{rank}\left(\xU^{[k]\dagger}\xH^{[kk]}\xV^{[k]}\right)&=&d^{[k]},
~~\forall k\in\{1,2,...,K\}, \label{eqn:condition2}
\end{eqnarray}
where $^\dagger$ denotes the conjugate transpose operator. Very
importantly, \cite{55} explains how the condition
\eqref{eqn:condition2} is automatically satisfied almost surely if
the channel matrices do not have any special structure,
${\mbox{rank}(\xU^{[k]})=\mbox{rank} (\xV^{[k]})=d^{[k]}\leq
\min(M^{[k]},N^{[k]}})$ and $\xU^{[k]}, \xV^{[k]}$ are designed to
satisfy \eqref{eqn:condition1}, which is independent of all direct
channels ${\xH^{[kk]}}$. We assume that general MIMO channels have
no structure and we force the transmit and receive filters to
achieve the required ranks by design. Thus,
\eqref{eqn:condition2} is automatically satisfied for us as well.

\section{Proper System} \label{sec:Propernetworks}
Based on classical results in algebraic geometry, like Bezout's
theorem, it is well known  that a generic system of multivariate
polynomial equations is solvable if and only if the number of
equations does not exceed the number of variables. While the
qualification  ``generic system of polynomials" is intended in a
precise sense and limits the scope of settings where the result can
be rigorously applied, the intuition behind this statement is
believed to be much more widely true. This conventional wisdom forms
the starting point for our work. By accurately accounting for the
number of equations, $N_e$, and the number of variables, $N_v$, we
classify a signal space interference alignment problem as either
improper or proper, depending on whether or not the number of
equations exceeds the number of variables.

\subsection{Counting the Total Number of Equations and Variables}
We rewrite
the conditions in \eqref{eqn:condition1} as follows:
\begin{equation}
\xu^{[k]\dagger}_m\xH^{[kj]}\xv^{[j]}_n = 0,~~ j\neq k, ~
j, k\in\{1,2,...,K\} \label{eqn:condition1b}
\end{equation}
\begin{equation*}
\forall n\in\{1,2,...,d^{[j]}\} \textrm{ and } \forall
m\in\{1,2,...,d^{[k]}\}
\end{equation*}
where $\xv^{[j]}_n$ and $\xu^{[k]}_m$ are the transmit and receive
beamforming vectors (columns of precoding and interference
suppression filters, respectively).
\par $N_e$ is directly obtained from \eqref{eqn:condition1b} as follows:
\begin{equation*}
N_e=\underset{{\substack{ k,j\in\mathcal{K} \\
k\neq j}}}{{\sum }}d^{[k]}d^{[j]}.
\end{equation*}
\par However, calculating the number of variables $N_v$ is less
straightforward. In particular, we have to be careful not to count
any superfluous variables that do not help with interference
alignment.
\par At the $k^{th}$
transmitter, the number of $M^{[k]}\times 1$ transmit beamforming
vectors to be designed is $d^{[k]}$ $\left(\xv^{[k]}_n\textrm{, }
\forall n\in\{1,2,...,d^{[k]}\}\right)$. Therefore, at first sight,
it may seem that the precoding filter of the $k^{th}$ transmitter,
$\xV^{[k]}$, has $d^{[k]}M^{[k]}$ variables. However, as we argue
next, we can eliminate $(d^{[k]})^2$ of these variables without loss
of generality.
\par The $d^{[k]}$ linearly independent columns of transmit precoding matrix ${\bf V}^{[k]}$  span the transmitted
signal space
\begin{eqnarray*}
\mathcal{T}^{[k]}&=&\mbox{span}({\bf V}^{[k]})\\
&=&\{{\bf v}: \exists {\bf a}\in\mathbb{C}^{d^{[k]}\times 1}, ~{\bf v}={\bf V}^{[k]}{\bf a} \}.
\end{eqnarray*}
Thus, the columns of ${\bf V}^{[k]}$ are the basis for the
transmitted signal space. However, the basis representation is not
unique for a given subspace. In particular, consider any full rank
$d^{[k]}\times d^{[k]}$ matrix ${\bf B}$. Then, continuing from the
last step of the above equations,
\begin{eqnarray*}
\mathcal{T}^{[k]}&=&\{{\bf v}: \exists {\bf a}\in\mathbb{C}^{d^{[k]}\times 1}, ~{\bf v}={\bf V}^{[k]}{\bf B}^{-1}{\bf Ba} \}\\
&=&\mbox{span}({\bf V}^{[k]}{\bf B}^{-1}).
\end{eqnarray*}
Thus, post-multiplication of the transmit precoding matrix with any
invertible matrix on the right does not change the transmitted
signal subspace. Suppose that we choose ${\bf B}$ to be the
$d^{[k]}\times d^{[k]}$ matrix that is obtained by deleting the
bottom $M^{[k]}-d^{[k]}$ rows of ${\bf V}^{[k]}$. Then, we have
${\bf V}^{[k]}{\bf B}^{-1}=\tilde{\xV}^{[k]}$, which is a
$M^{[k]}\times d^{[k]}$ matrix with the following structure:
\begin{equation*}
\tilde{\xV}^{[k]}=\left[
\begin{array}{ccccc}
 &  & \xI_{d^{[k]}}  &  &\\
\bar{\xv}_1^{[k]} & \bar{\xv}_2^{[k]} & \bar{\xv}_3^{[k]}  & \cdots  & \bar{\xv}_{d^{[k]}}^{[k]}%
\end{array}%
\right]
\end{equation*}
where $\xI_{d^{[k]}}$ is the $d^{[k]}\times d^{[k]}$ identity matrix
and ${\bar{\xv}_n^{[k]}, \forall n\in\{1,2,...,d^{[k]}\}}$ are
$\left(M^{[k]}-d^{[k]}\right)\times 1$ vectors. It is easy to argue
that there is no other basis representation for the transmitted
signal space with fewer variables.

Therefore, by eliminating superfluous variables for the interference alignment problem, the
number of variables to be designed for the precoding filter of
the $k^{th}$ transmitter, $\tilde{\xV}^{[k]}$, is
$d^{[k]}\left(M^{[k]}-d^{[k]}\right)$. Likewise, the actual number
of variables to be designed for the interference suppression
filter of the $k^{th}$ receiver, $\tilde{\xU}^{[k]}$, is
$d^{[k]}\left(N^{[k]}-d^{[k]}\right)$. As a result, the total number
of variables in the network to be designed is:
\begin{equation*}
N_v=\sum_{k=1}^Kd^{[k]}\left(M^{[k]}+N^{[k]}-2d^{[k]}\right).
\end{equation*}

\subsection{Proper System Characterization}

To formalize the definition of a proper system, we first introduce
some notation. We use the notation $E_{mn}^{kj}$ to represent the
equation
\begin{eqnarray*}
\xu^{[k]\dagger}_m\xH^{[kj]}\xv^{[j]}_n = 0.
\end{eqnarray*}
The set of variables involved in an equation $E$ is indicated by the
function $\mbox{var}(E)$. Clearly
\begin{eqnarray*}
|\mbox{var}(E_{mn}^{kj})|=(M^{[j]}-d^{[j]})+(N^{[k]}-d^{[k]}),
\end{eqnarray*}
where $|\cdot|$ is the cardinality of a set.

Using this notation, we denote the set of $N_e$ equations as follows:
\begin{eqnarray*}
\mathcal{E}&=&\{E_{mn}^{kj}|~ j,k\in\mathcal{K}, k\neq j, \\
&&~~~~~~~~~m\in\{1,\cdots, d^{[k]}\}, n\in\{1,\cdots,d^{[j]}\}\}.
\end{eqnarray*}
This leads us to the formal definition of a proper system.
\begin{definition} \label{def:proper}
\theoremstyle{definition} A $\Pi_{k=1}^K(M^{[k]}\times
N^{[k]},d^{[k]})$ system is proper if and only if
\begin{equation}\forall S\subset\mathcal{E}, |S|\leq\left| \bigcup_{E\in
S}\mbox{var}(E)\right|. \label{eqn:proper}\end{equation}
\end{definition}

In other words, for all subsets of equations, the number of
variables involved must be at least as large as the number of
equations in that subset.

The condition to identify a proper system can be computationally cumbersome
because we have to test all subsets of equations. However, several simplifications are possible in this regard. We start with symmetric systems.

\subsection{Symmetric Systems $(M\times N,d)^K$}
For symmetric systems, simply comparing the total number of
equations and the total number of variables suffices to determine
whether the system is proper or improper.
\begin{theorem}\label{theo:symmetric}
A symmetric system $(M\times N,d)^K$ is proper if and only if
\begin{equation*}
N_v\geq N_e \Rightarrow M+N-(K+1)d\geq 0.
\end{equation*}
\end{theorem}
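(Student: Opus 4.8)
The plan is to prove the stated equivalence in two implications, treating the displayed arithmetic identity as routine: substituting the symmetric values $N_e=K(K-1)d^2$ and $N_v=Kd(M+N-2d)$ and dividing by $Kd>0$ shows at once that $N_v\geq N_e$ is the same condition as $M+N-(K+1)d\geq 0$. The substance is to connect this single global count with the subset condition \eqref{eqn:proper} of Definition~\ref{def:proper}.

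For the direction ``proper $\Rightarrow N_v\geq N_e$'' I would simply apply \eqref{eqn:proper} to $S=\mathcal{E}$. Since $K\geq 2$ and $d\geq 1$, each transmit variable of a group $\bar{\xv}^{[j]}_n$ appears in some equation $E^{kj}_{mn}$ (any $k\neq j$), and likewise each receive variable appears, so $\bigcup_{E\in\mathcal{E}}\mbox{var}(E)$ is the whole variable set, of size $N_v$; hence $N_e=|\mathcal{E}|\leq N_v$.

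The harder direction is ``$N_v\geq N_e \Rightarrow$ proper''. Here I would exploit that $\mbox{var}(E^{kj}_{mn})$ is the disjoint union of the transmit group of $(j,n)$, carrying $M-d$ variables, and the receive group of $(k,m)$, carrying $N-d$ variables. For a subset $S$, let $t$ and $r$ count the distinct transmit and receive groups it touches, and let $a_i,b_i\in\{0,\dots,d\}$ be the numbers of transmit and receive groups of user $i$ that are touched, so $t=\sum_i a_i$ and $r=\sum_i b_i$. Then $\big|\bigcup_{E\in S}\mbox{var}(E)\big|=t(M-d)+r(N-d)$ exactly, while each equation of $S$ pairs a touched transmit group with a touched receive group of a \emph{different} user, giving $|S|\leq tr-\sum_{i=1}^K a_ib_i$. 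It therefore suffices to prove the combinatorial inequality $tr-\sum_i a_ib_i\leq t(M-d)+r(N-d)$ for all admissible $a_i,b_i$. Because the difference of the two sides is affine in each $a_i$ and each $b_i$, its maximum over $[0,d]^{2K}$ occurs at a vertex, where every $a_i,b_i\in\{0,d\}$; such a vertex is encoded by integers $p=\#\{i:a_i=d\}$ and $q=\#\{i:b_i=d\}$ with overlap minimized at $\max(0,p+q-K)$, and the claim collapses to $\phi(p,q)\leq 0$ for all integers $0\leq p,q\leq K$, where $\phi(p,q)=dpq-d\max(0,p+q-K)-(M-d)p-(N-d)q$.

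To close the argument I would bound the subtracted linear part from below using $M,N\geq d$ together with $(M-d)+(N-d)\geq(K-1)d$, which is exactly the hypothesis. When $p+q\leq K$ this yields $\phi(p,q)\leq d\min(p,q)\big(\max(p,q)-(K-1)\big)$, and when $p+q\geq K$ a short factorization gives, for $p\leq q$ (the case $q\leq p$ being symmetric), $\phi(p,q)\leq d(p-1)(q-K)$; in both forms the product is nonpositive. I expect this final combinatorial step to be the main obstacle, for two reasons. First, one must justify that the entire family of subset tests reduces to a two-parameter inequality whose tight case $(p,q)=(K,K)$, where $\phi(K,K)=-K\big(M+N-(K+1)d\big)$, reproduces exactly $N_v\geq N_e$. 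Second, nonpositivity of the factors $\max(p,q)-(K-1)$ and $p-1$ genuinely relies on $p,q$ being \emph{integers}: the continuous relaxation of $\phi$ can be strictly positive along $p+q=K$ (for instance when $N=d$), so the passage to integral vertices is not a cosmetic convenience but the crux of why ``comparing totals suffices'' for symmetric systems.
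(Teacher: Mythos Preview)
Your argument is correct, and it is considerably more substantive than the paper's own proof. The paper disposes of Theorem~\ref{theo:symmetric} in a single sentence, asserting that ``because of the symmetry, each equation involves the same number of variables and any deficiency in the number of variables shows up in the comparison of the total number of variables versus the total number of equations,'' and then simply plugs in $N_v$ and $N_e$. No verification of the subset condition~\eqref{eqn:proper} is actually carried out.

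What you have done is supply that missing verification. Your reduction is sound: encoding a subset $S$ by the transmit and receive groups it touches gives exactly $|\bigcup_{E\in S}\mbox{var}(E)|=t(M-d)+r(N-d)$, and the bound $|S|\leq tr-\sum_i a_ib_i$ follows because distinct equations correspond to distinct (transmit group, receive group) pairs with different user indices. The multilinearity observation is correct and pushes the worst case to a vertex; the further optimization over $p,q$ with overlap $\max(0,p+q-K)$ is the right way to handle the remaining freedom. Your two case bounds for $\phi$ check out line by line, using only $M,N\geq d$ and $(M-d)+(N-d)\geq(K-1)d$.

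Your closing remark is the genuinely valuable point: the example $N=d$, $M=Kd$ with non-integer $p\in(0,1)$ on the line $p+q=K$ shows that the continuous relaxation of $\phi$ can be strictly positive, so the integrality of $p,q$ is essential. This makes precise why the paper's symmetry heuristic, while correct in conclusion, is not itself a proof --- the ``obvious'' averaging or relaxation arguments one might reach for do fail, and the actual mechanism is the discrete structure you isolate.
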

\begin{proof} Because of the symmetry, each equation involves the
same number of variables and any deficiency in the number of
variables shows up in the comparison of the total number of
variables versus the total number of equations. Plugging in the
values of $N_v$ and $N_e$ computed earlier, we have the result of
Theorem \ref{theo:symmetric}. \end{proof}

\example \label{ex:2x3,1} Consider the $\left(2\times 3, 1\right)^4$
system. For this system, $M+N-(K+1)d=2+3-(5)=0$ so that this system
is proper.

\example Consider the $(1\times 2, 1)^3$ system, i.e., a 3-user
symmetric interference network, where each transmitter has one
antenna, each receiver has two antennas, and each user demands 1
DoF.  For this system, $M+N-(K+1)d = 1+2-(4)<0$ so that this system
is improper.

\begin{remark} In light of the intuition that proper systems are likely to be feasible,
Theorem \ref{theo:symmetric} implies that for every user to achieve
$d$ DoF in a $K\text{-user}$ symmetric network, it suffices to have
a total of $M+N\geq (K+1)d$ antennas between the transmitter and
receiver of a user. The antennas can be distributed among the
transmitter and receiver arbitrarily as long as each of them has at
least $d$ antennas and as long as the symmetric nature of the system
is preserved. In particular, to achieve $K$ DoF in a $K\text{-user}$
symmetric network (1 DoF per user), we only need a total of $K+1$
antennas between the transmitter and receiver of a user.
\end{remark}

\example Consider a $4$-user symmetric network, where we wish to
achieve 4 DoF. Then, 5 antennas between the transmitter and receiver
of a user would suffice to produce a proper system, e.g., the system $(2\times 3,1)^4$ in
Example \ref{ex:2x3,1}.

\example Consider a $6$-user symmetric network, where we wish to
achieve 6 DoF. Then, 7 antennas between the transmitter and receiver
of a user would suffice, e.g., $(3\times 4,1)^6$.
\par The following corollary shows the limitations of linear
interference alignment over constant MIMO channels (with no symbol
extensions).

\begin{corollary}
The DoF of a proper $(M\times N, d)^K$ system, which is normalized
by a single user's DoF in the absence of interference, is upper
bounded by:
\begin{equation*}
\frac{dK}{\min(M,N)}\leq
1+\frac{\max(M,N)}{\min(M,N)}-\frac{d}{\min(M,N)}.
\end{equation*}
\end{corollary}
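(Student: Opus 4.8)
The plan is to derive this bound as an immediate algebraic consequence of Theorem \ref{theo:symmetric}. Since the system $(M\times N,d)^K$ is assumed proper, Theorem \ref{theo:symmetric} gives $M+N-(K+1)d\geq 0$. I would first rearrange this inequality by isolating the total DoF demand: writing $(K+1)d = Kd + d$, the properness condition becomes
\begin{equation*}
Kd \leq M+N-d.
\end{equation*}
This is the only structural input needed; everything that follows is normalization and rewriting.

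Next I would identify the correct denominator. In the absence of interference, a single user equipped with $M$ transmit and $N$ receive antennas can support at most $\min(M,N)$ degrees of freedom, so $\min(M,N)$ is the natural quantity by which to normalize. The total DoF achieved by the network is $Kd$, and hence the normalized DoF of interest is exactly $\frac{dK}{\min(M,N)}$. Dividing the displayed inequality through by the positive quantity $\min(M,N)$ yields
\begin{equation*}
\frac{Kd}{\min(M,N)} \leq \frac{M+N-d}{\min(M,N)}.
\end{equation*}

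Finally, I would simplify the right-hand side using the identity $M+N=\min(M,N)+\max(M,N)$, which holds for any pair $(M,N)$. Substituting gives
\begin{equation*}
\frac{M+N-d}{\min(M,N)} = 1 + \frac{\max(M,N)}{\min(M,N)} - \frac{d}{\min(M,N)},
\end{equation*}
and chaining this with the previous inequality produces precisely the claimed bound. I do not anticipate any genuine obstacle here: the corollary is a one-line rearrangement of the properness criterion, and the only step that requires a moment's reflection is recognizing that the interference-free single-user DoF, which serves as the normalizing denominator, equals $\min(M,N)$ rather than $M$, $N$, or $M+N$.
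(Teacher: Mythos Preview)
Your proposal is correct and matches the paper's approach exactly: the paper's entire proof is the one-line remark that the corollary is ``straightforward from the condition of Theorem~\ref{theo:symmetric},'' and you have simply spelled out that straightforward algebra (rearranging $M+N-(K+1)d\geq 0$ to $Kd\leq M+N-d$, dividing by $\min(M,N)$, and using $M+N=\min(M,N)+\max(M,N)$).
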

\begin{proof} The proof is straightforward from the condition of
Theorem \ref{theo:symmetric}.\end{proof}

\begin{remark} \label{rem:Rem2} For the
case $M=N$, note that the DoF of a proper system is no more than
twice the DoF achieved by each user in the absence of interference.
Note that for diagonal (time-varying) channels, it is shown in
\cite{85} that the DoF of a $K\text{-user}$ MIMO network ($M=N$
antennas at each node) is $K/2$ times the number of DoF achieved by
each user in the absence of interference. This result shows that the
diagonal structure of the channel matrix is very helpful. Going from
the case of no structure (general MIMO channels) to diagonal
structure, the ratio of total DoF to the single user DoF increases
from a maximum value of 2 to $K/2$.
\end{remark}

The following corollary identifies the groups of symmetric systems,
which are either all proper or all improper.
\begin{corollary}
If $(M\times N, d)^K$ system is proper (improper) then so is the
$\big((M+1)\times(N-1),d\big)^K$ system as long as
$d\leq\min(M,N-1)$. Similarly, if  the $(M\times N, d)^K$ system is
proper (improper) then so is the $\big((M-1)\times(N+1),d\big)^K$
system as long as $d\leq\min(M-1,N)$.
\end{corollary}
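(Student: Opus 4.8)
The plan is to reduce everything to the clean characterization already established in Theorem~\ref{theo:symmetric}, which states that a symmetric system $(M\times N,d)^K$ is proper if and only if $M+N-(K+1)d\geq 0$. The crucial observation is that this inequality depends on the transmit and receive antenna counts only through their \emph{sum} $M+N$, while $K$ and $d$ are held fixed across the two systems being compared.

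First I would note that the map $(M,N)\mapsto(M+1,N-1)$ leaves the sum invariant, since $(M+1)+(N-1)=M+N$. Consequently the proper-system inequality for $\big((M+1)\times(N-1),d\big)^K$ is literally the same inequality as for $(M\times N,d)^K$. Because Theorem~\ref{theo:symmetric} is an if-and-only-if statement, one system satisfies the inequality exactly when the other does; hence if $(M\times N,d)^K$ is proper then so is $\big((M+1)\times(N-1),d\big)^K$, and since the condition is negated in the same way, the improper case transfers identically.

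Second, I would check that the side condition $d\leq\min(M,N-1)$ guarantees that both systems are legitimate instances of the model, i.e. that each respects the bound $d\leq\min(\cdot,\cdot)$. From $d\leq\min(M,N-1)$ we obtain $d\leq M\leq M+1$ and $d\leq N-1<N$, so $d\leq\min(M,N)$ for the original system and $d\leq\min(M+1,N-1)$ for the transformed one. The second claim is entirely symmetric: the map $(M,N)\mapsto(M-1,N+1)$ again preserves $M+N$, and the hypothesis $d\leq\min(M-1,N)$ yields $d\leq M-1<M$ together with $d\leq N\leq N+1$, so both systems are valid and the same invariance argument applies.

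I do not anticipate a genuine obstacle here, since the content is carried entirely by the invariance of $M+N$ under a single-antenna transfer between transmitter and receiver. The only point requiring care is verifying the side conditions, so that the transformed system remains a valid instance of the model; everything else follows mechanically from the equivalence in Theorem~\ref{theo:symmetric}.
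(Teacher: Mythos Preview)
Your proposal is correct and follows essentially the same approach as the paper: both arguments invoke Theorem~\ref{theo:symmetric} and observe that its inequality depends on $M$ and $N$ only through the sum $M+N$, which is invariant under a single-antenna transfer between transmitter and receiver. Your version is more explicit in verifying the side conditions, but the underlying idea is identical.
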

\begin{proof} Since the condition in Theorem \ref{theo:symmetric} depends
only on $M+N$, it is clear that we can transfer transmit and receive
antennas without affecting the proper (or improper) status of the
system. \end{proof}

\example The systems $(1\times 4, 1)^4, (2\times 3,1)^4,(3\times
2,1)^4,$ and ${(4\times 1,1)^4}$ are in the same group, which are
formed by successively transferring an antenna between transmitters
and receivers. It is easy to see that the $(4\times 1,1)^4$ system
is proper because simple zero-forcing suffices to achieve the DoF
demand. By virtue of being in the same group, the rest are proper as
well.

\example By similar arguments, the systems $(1\times 3,1)^3,$
${(2\times 2, 1)^3, \textrm{ and } (3\times 1,1)^3}$ are in the same
group and are all proper.

\subsection{Asymmetric Systems $\Pi_{k=1}^K\left(M^{[k]}\times N^{[k]}, d^{[k]}\right)$ }

For asymmetric systems, if the system is improper, simply comparing
the total number of equations and the total number of variables may
suffice. \begin{theorem} \label{theo:asymmetric} An asymmetric
system $\Pi_{k=1}^K(M^{[k]}\times N^{[k]},d^{[k]})$ is improper if
\begin{equation}
N_v<N_e \Leftrightarrow \sum_{k=1}^Kd^{[k]}\left(M^{[k]}+N^{[k]}-2d^{[k]}\right)<\sum_{{\substack{ k,j\in\mathcal{K} \\
k\neq j}}}^Kd^{[k]}d^{[j]}. \label{eqn:generalcondition}
\end{equation}
\end{theorem}

\example Consider the system ${(2\times 2,1)(2\times 3,1)^3}$, which
is clearly infeasible when we compare it to the ${(2\times 3,1)^4}$
system in Example \ref{ex:2x3,1}. Confirmatively, the former system
is improper since it has 11 variables and 12 equations in total.

Note that we can sometimes identify the bottleneck equations in the
system by checking the equations with the fewest number of
variables, i.e., the equations involving the fewest number of
transmitter and receiver antennas.

 \example
\label{ex:2x1,1} Consider the simple system $(2\times 1, 1)^2$,
which is clearly feasible (proper) because simple zero-forcing is
enough for achievability. However, now consider the $(2\times 1,
1)(1\times 2,1)$ system, which also has the same total number of
equations $N_e$ and variables $N_v$ as the $(2\times 1,1)^2$ system.
Thus, only comparing $N_v$ and $N_e$ would mislead one to believe
that this system is proper. However, suppose that we only check the
equation $E_{11}^{12}$; that is, our subset is $S=\{E_{11}^{12}\}$
so that $|S|=1$. $E_{11}^{12}$ corresponds to the link between the
transmitter 2 and receiver 1, both of which have only one antenna
each. Therefore, $|\mbox{var}(E_{11}^{12})|=0$. Thus, this system
has an equation with zero variable, which makes the system improper;
therefore, infeasible.

\example \label{ex:2x3,3x2}Several interesting cases emerge from
applying the condition \eqref{eqn:generalcondition}. For example,
consider the 2-user interference network $(2\times 3, 1)(3\times 2,
1)$, where a total of 2 DoF is desired. It is easily checked that
this system is proper and the achievable scheme is described in
\cite{93}. Now, consider the \mbox{4-user} interference network,
which consists of two sets of these networks, all interfering with
each other $(2\times 3,1)^2(3\times 2,1)^2$, where a total of 4 DoF
is desired. By using \eqref{eqn:generalcondition}, it is easily
verified that this is a proper system. Surprisingly, by simply going
from two users to four users, DoF is doubled in this case. We also
present the closed form solution for interference alignment of this
system in Section \ref{sec:NewClosedForm}.

\section{Numerical Results}\label{sec:NumResults}

We tested numerous interference alignment problems for both
symmetric and asymmetric cases by using the numerical algorithm in
\cite{55}. In every case so far, we have found the numerical results
to be consistent with the guiding intuition of this work; that is,
for single beam cases, proper systems are almost surely feasible and
improper systems are not.

In this section, we provide numerical results for a few interesting
and representative cases. The results are in terms of the
interference percentage, which is defined in \cite{55}. i.e., the
fraction of the interference power that is existent in the
dimensions reserved for the desired signal. The interference
percentage at the $k^{th}$ receiver is evaluated as follows:
\begin{equation}\label{eqn:leakage}
p_{k}=\frac{\underset{j=1}{\overset{d^{[k]}}{\sum }}\lambda
_{j}\left[\mathbf{Q}^{[k]}\right] }{\textrm{Tr}[\mathbf{Q}^{[k]}]},
\end{equation}
where $\lambda_j$ denotes the smallest eigenvalue of a matrix, Tr
denotes the trace of a matrix, and $\mathbf{Q}^{[k]}$ denotes the
interference covariance matrix at the $k^{th}$ receiver:
\begin{equation*}
\mathbf{Q}^{[k]}=\sum_{j=1,j\neq
k}^K\frac{P^{[j]}}{d^{[j]}}\xH^{[kj]}\xV^{[j]}\xV^{[j]\dagger}\xH^{[kj]\dagger}.
\end{equation*}

The numerator and the denominator of \eqref{eqn:leakage} are the
interference and desired signal space powers at the $k^{th}$
receiver, respectively.

In Fig. \ref{fig:DoFs}, the interference percentages versus the
total number of beams are shown. The total number of beams starts
from the expected total DoF of each network. Therefore, after the
first point on the x-axis, where excess total DoF is demanded the
interference percentage of each network is not zero. The nonzero
interference percentage indicates that interference alignment is not
possible for the demanded total DoF.

Therefore, by observing zero interference percentages on the DoF
point in Fig. \ref{fig:DoFs}, we show that the numerical results are
consistent with our statements in Section \ref{sec:Propernetworks}
that these networks are proper, and thus feasible. Note that in Fig.
\ref{fig:DoFs}, there are numerical results also for multi-beam
cases, which we discuss in Section \ref{sec:GOuterB}.

From the excess total DoF results in Fig. \ref{fig:DoFs}, we also
understand that the first two systems with expected 4 total DoF have
more interference percentages than other systems with expected 8
total DoF. We also observe that the system with the less total
number of antennas at the receiver side has more interference
percentage than other system with the same expected total DoF, e.g.,
${(2\times 3,1)^2(3\times 2,1)^2}$ has more interference percentage
than $(2\times 3,1)^4$.
\begin{figure} [htb]
\centering \hspace{-1.5cm}\includegraphics[height=8cm, width=10cm]
{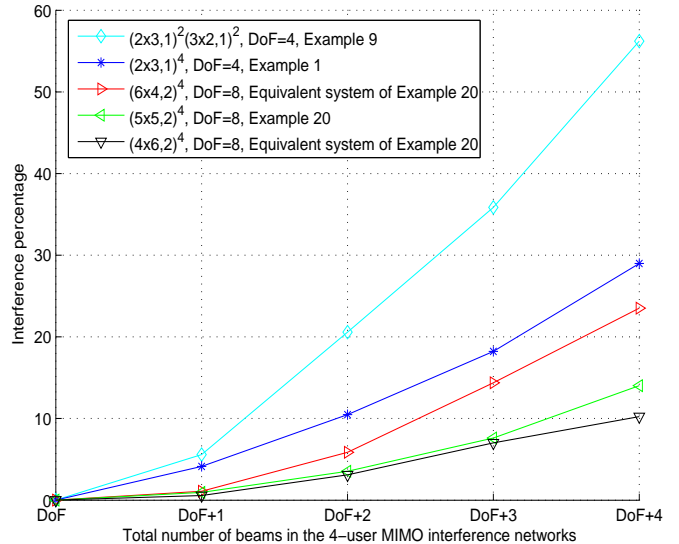} \hspace{-2cm} \caption{Interference percentages as a function
of the total number of beams in the networks (DoF: Expected total
degrees of freedom. DoF+i (i=1,2,3,4): Excess total degrees of
freedom). \label{fig:DoFs}}
\end{figure}

\section{Bezout's and Bernshtein's Theorems}\label{sec:BTheorem}

As explained earlier, the definition of proper systems is an
intuitive generalization of the classical result in algebraic
geometry known as Bezout's theorem. While the formal statement of
Bezout's theorem is presented later, the theorem essentially states
that ``generic'' systems of multivariate polynomial equations are
solvable if and only if the number of equations does not exceed the
number of variables. Since the notion of generic system is critical
in our work, we first summarize what is meant by a generic system in
simple terms. While the mathematical definition of ``genericity'' is
presented in Appendix, the notion of a ``generic'' system refers to
two aspects.
\begin{enumerate}
\item The supports of polynomials, which are determined by non-zero coefficients of polynomials.
\item The independence (e.g. algebraic independence) of non-zero coefficients.
\end{enumerate}
``Generic'' system in the sense of Bezout's theorem refers to the
system of dense polynomials (all coefficients up to the degree of
each polynomial are non-zero) with independent random coefficients.
According to Bezout's theorem, these systems are solvable almost
surely as long as the number of polynomial equations does not exceed
the number of variables, and the number of solutions is equal to the
product of the degrees of polynomials.

On the other hand, ``generic'' system in the sense of Bernshtein's
theorem refers to the system of sparse polynomials with independent
random coefficients. For dense polynomials, the result of Bezout's
theorem can be derived from Bernshtein's theorem and therefore,
Bernshtein generalizes Bezout's theorem. For our feasibility
problem, the system of polynomials does not always satisfy
``genericity'' in Bernshtein's theorem since while the coefficients
are independent in the single-beam case, the same is not true for
the multi-beam case.

The single beam case refers to the scenario when each user demands
only one DoF, to be achieved by sending one beam from each
transmitter. Therefore, for this scenario, the channel matrix
$\xH^{[kj]}$ of each user occurs only once in the corresponding
polynomial system \eqref{eqn:condition1}. On the other hand, for a
multi-beam case, consider the user who wishes to achieve more than 1
DoF. The channel matrix $\xH^{[kj]}$ of that user occurs more than
once in the corresponding polynomial system, which leads to
dependent coefficients.

As mentioned before, although we use only Bernshtein's theorem for
the proofs in our work, we also summarize Bezout's theorem as an
elementary step. We briefly rephrase these two theorems insofar as
required within the scope of this paper. Let us start with
definitions and notations.

\subsection{Multivariate Polynomial Systems}

\subsubsection{A polynomial system and its support sets}
\par Let $\mathbb{C}\left[x_1,\cdots,x_n\right]$ denote the
polynomial ring, where the coefficients are in the field
$\mathbb{C}$ and the variable $x_i, \forall i\in\{1,2,\cdots,n\}$
has nonnegative integer (denoted by $\mathbb{Z}_{\geq0}$) exponent.
The multivariate polynomial system that we are interested in
consists of $n$ variables and $n$ equations:
\begin{equation}\label{eqn:polysys}
    f_1=0,\cdots,f_n=0\textrm{,}
\end{equation}
where $f_1,\cdots,f_n \in \mathbb{C}\left[x_1,\cdots,x_n\right]$.
\par Let $e_{il}^j$ denote the nonnegative integer exponent of
the $l^{th}$ variable $x_l$ in the $j^{th}$ monomial of the
polynomial ${f_i,\forall i\in\{1,2,...,n\}}$:
\begin{equation*}
f_{i}=\cdots +c_{ij}\underset{j^{th}\text{ monomial}}{\underbrace{%
x_{1}^{e_{i1}^{j}}x_{2}^{e_{i2}^{j}}\cdots x_{l}^{e_{il}^{j}}\cdots
x_{n}^{e_{in}^{j}}}}+\cdots,
\end{equation*}
where $c_{ij}$ denotes the complex valued coefficient.

Also, let
\begin{equation*}
a_{ij}\triangleq(e_{i1}^j,e_{i2}^j,\cdots,e_{in}^j)\in\mathbb{Z}_{\geq0}^n,
\end{equation*}
\begin{equation*}
\forall i\in\{1,2,...,n\}\textrm{ and }\forall j\in\{1,2,...,m_i\}
\end{equation*}
denote a nonnegative integer vector, which is also called an
exponent vector. \par Then, we denote the $j^{th}$ monomial in $f_i$
as follows:
$$x^{a_{ij}}\triangleq x_1^{e_{i1}^j}x_2^{e_{i2}^j}\cdots
x_n^{e_{in}^j}.$$
\par Finally, let $\mathcal{A}_i=\{a_{i1},\cdots,a_{im_i}\}\subset\mathbb{Z}_{\geq0}^n$
denote the set of exponent vectors with nonzero coefficients in
$f_i$. $\mathcal{A}_i$ is also called the support set of $f_i$.
\par Therefore, each polynomial has the following
structure with a support set $\mathcal{A}_i$:
\begin{equation}\label{eqn:polystruct}
    f_i=\overset{m_{i}}{\underset{j=1}{\sum
    }}c_{ij}x^{a_{ij}}.
\end{equation}
\example Consider the following $i^{th}$ polynomial:
\begin{equation*}
f_i=c_{i1}x_1+c_{i2}x_1x_2+c_{i3}.
\end{equation*}
Then, $a_{i1}=(1,0),a_{i2}=(1,1)\textrm{, and } a_{i3}=(0,0)$.
Accordingly, the support set $\mathcal{A}_i$ for this polynomial is
the set of vertexes of a right triangle.

\subsubsection{Common solutions of a polynomial system}

Let ${S_k=\{x_1^k,\cdots,x_n^k\}}$ denote the $k^{th}$, $\forall
k\in\{1,2,...,s\}$ common solution for the $n$ dimensional
polynomial system \eqref{eqn:polysys}, which has $s$ common
solutions in total:
\begin{equation*}
f_1(x_1^k,\cdots,x_n^k)=0,\cdots,f_n(x_1^k,\cdots,x_n^k)=0.
\end{equation*}
Then, the set of all common solutions $S_C$ that satisfies the
polynomial system \eqref{eqn:polysys} is as follows:
\begin{equation*}
S_C=\{S_1,\cdots,S_s\}.
\end{equation*}
In other words, there are $s$ points in the corresponding space that
satisfy the polynomial system \eqref{eqn:polysys}, e.g.,
${(x_1^k,\cdots,x_n^k)\in \mathbb{C}^n, \forall k\in\{1,2,...,s\}}$.
\subsubsection{The degree of a polynomial}
\par Let $\xdeg(f_i)$ denote the degree of $f_i$, which is defined as follows:
\begin{equation*}
\xdeg(f_i)=\textrm{max}\left({e_{i1}^1+\cdots+e_{in}^1,\cdots,e_{i1}^{m_i}+\cdots
+e_{in}^{m_i}}\right).
\end{equation*}

\subsection{Dense and Sparse Polynomial Systems}

For a dense polynomial system, in any polynomial $f_i$, monomials
with all combinations of variable exponents up to $\xdeg(f_i)$ have
\emph{nonzero} coefficients. On the other hand, for a sparse
polynomial system, in any polynomial $f_i$, some certain monomials
\emph{may} have \emph{zero} coefficients.

\example \label{ex:densepoly} For $n=2$, $\xdeg(f_1)=3$ and
$\xdeg(f_2)=4$, a dense polynomial system is as follows:
\begin{eqnarray*}
  f_1 &=&
  c_{11}x_1^3+c_{12}x_2^3+c_{13}x_1^2x_2+c_{14}x_1x_2^2+c_{15}x_1^2+\\
  &&c_{16}x_2^2+c_{17}x_1x_2+c_{18}x_1+c_{19}x_2+c_{110}\\
  f_2 &=& c_{21}x_1^4+c_{22}x_2^4+c_{23}x_1^3x_2+c_{24}x_1x_2^3+c_{25}x_1^2x_2^2+ \\
  &&c_{26}x_1^3+c_{27}x_2^3+c_{28}x_1^2x_2+c_{29}x_1x_2^2+\\
&&c_{210}x_1^2+c_{211}x_2^2+c_{212}x_1x_2+\\
&&c_{213}x_1+c_{214}x_2+c_{215}.
\end{eqnarray*}

\example \label{ex:sparsepoly} One of the sparse polynomial systems
corresponding to the previous example may be as follows:
\begin{eqnarray*}
  f_1 &=&
  c_{11}x_1x_2^2+c_{12}x_1^2+c_{13}x_2^2+c_{13}\\
  f_2 &=& c_{21}x_1^3x_2+c_{22}x_2^4+c_{23}x_1x_2.
\end{eqnarray*}

Now, we are ready to state Bezout's theorem.

\subsection{Bezout's Theorem}

\begin{theorem}[Bezout's Theorem - specialized]
Given dense polynomials $f_1,\cdots , f_n \in
\mathbb{C}\left[x_1,\cdots,x_n\right]$ with common solutions in
$\mathbb{C}^n$, let $\xdeg(f_i)$ be the degree of polynomial $f_i$.
For independent random coefficients\footnote{Also called generic
choices of coefficients or almost all specializations of
coefficients in mathematics terminology, which we explain in the
Appendix.} $c_{ij},$ ${\forall i\in\{1,2,...,n\} \text{ and }
\forall j\in\{1,2,...,m_i\}}$, the number of common solutions is
exactly equal to $\xdeg(f_1)\xdeg(f_2)\cdots\xdeg(f_n)$.
\end{theorem}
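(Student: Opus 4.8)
The plan is to reduce the affine statement to the projective version of Bezout's theorem together with two genericity reductions. First I would homogenize: introducing an auxiliary variable $x_0$, replace each $f_i$ of degree $d_i\triangleq\xdeg(f_i)$ by its homogenization $F_i(x_0,x_1,\cdots,x_n)$, a form of degree $d_i$ in $n+1$ variables. The affine common zeros of $\{f_i\}$ in $\mathbb{C}^n$ correspond exactly to the projective common zeros of $\{F_i\}$ in $\mathbb{P}^n$ that lie off the hyperplane at infinity $\{x_0=0\}$, so it suffices to count projective zeros and then discard those at infinity. The use of the algebraically closed field $\mathbb{C}$ is essential here.

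The heart of the argument is the projective count. I would show that for independent random coefficients $F_1,\cdots,F_n$ form a regular sequence in $\mathbb{C}\left[x_0,\cdots,x_n\right]$, so that the Koszul complex is exact and the Hilbert series of the quotient ring $R=\mathbb{C}\left[x_0,\cdots,x_n\right]/(F_1,\cdots,F_n)$ is
\begin{equation*}
H_R(t)=\frac{\prod_{i=1}^n\left(1-t^{d_i}\right)}{(1-t)^{n+1}}=\frac{1}{1-t}\prod_{i=1}^n\frac{1-t^{d_i}}{1-t}.
\end{equation*}
Because $n$ forms on the $n$-dimensional space $\mathbb{P}^n$ generically cut out a zero-dimensional scheme, the Hilbert function of $R$ is eventually constant, and that constant --- the degree (length) of the scheme cut out by the $F_i$ --- is obtained by letting $t\to 1$ in the polynomial $\prod_{i=1}^n\left(1+t+\cdots+t^{d_i-1}\right)$, namely $\prod_{i=1}^n d_i$. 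This accounts for all projective solutions with multiplicity. Equivalently, one may invoke intersection theory on $\mathbb{P}^n$, where a degree-$d_i$ hypersurface has class $d_iH$ and $H^n$ is a single reduced point, yielding the same product.

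Two genericity reductions then deliver the exact affine count. The first is that no common solution lies at infinity: on $\{x_0=0\}$ each $F_i$ restricts to the top-degree (leading) form of $f_i$, and these are $n$ forms in the $n$ homogeneous coordinates of $\mathbb{P}^{n-1}$; since $\dim\mathbb{P}^{n-1}=n-1$, generic choices of such forms have empty common zero locus, so all $\prod_{i=1}^n d_i$ projective solutions are in fact affine. The second is transversality: for generic coefficients the Jacobian of $(f_1,\cdots,f_n)$ is nonsingular at every common solution, so each point carries multiplicity one and the scheme is reduced. Combining the two, the number of affine common solutions equals the projective degree exactly, i.e. $\prod_{i=1}^n d_i$.

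The main obstacle I anticipate is making the two genericity claims rigorous rather than heuristic. Each requires exhibiting the degenerate configurations --- a nontrivial solution at infinity, or a tangential/higher-multiplicity intersection --- as the vanishing locus of a \emph{nonzero} polynomial in the coefficients $c_{ij}$: for the infinity case this is the multivariate resultant of the leading forms, and for transversality a discriminant-type condition. One must then argue that each such locus is a proper Zariski-closed subset of coefficient space, so that its complement (the ``independent random coefficients'' regime) is generic and captures almost all specializations. Verifying that the $F_i$ form a regular sequence for generic coefficients is a similar, though more routine, genericity check that underwrites the Hilbert-series computation above.
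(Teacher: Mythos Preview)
Your proposal is a sound outline of the classical proof of B\'ezout's theorem via homogenization, the Koszul/Hilbert-series computation (or equivalently the intersection-theoretic identity $(d_1H)\cdots(d_nH)=d_1\cdots d_n\,[\text{point}]$ in $\mathbb{P}^n$), followed by the two genericity reductions (no solutions at infinity; transversality). The obstacles you flag are exactly the right ones, and you correctly identify the resultant of the leading forms and a discriminant-type polynomial as the nonzero conditions cutting out the degenerate loci.

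That said, the paper does \emph{not} prove this theorem at all. B\'ezout's theorem is stated as a classical result, and the only argument offered is the remark that it is a special case of Bernshtein's theorem: for a dense system the Newton polytope of $f_i$ is the dilated simplex $d_i\Delta_n$, and the mixed volume $\textrm{MV}(d_1\Delta_n,\cdots,d_n\Delta_n)$ equals $d_1\cdots d_n$; the paper defers both this computation and the proof of Bernshtein's theorem itself to reference~\cite{94}. So your approach is genuinely different and considerably more self-contained than anything in the paper. The trade-off is that your route requires projective machinery (regular sequences, Hilbert series or the Chow ring of $\mathbb{P}^n$), whereas the paper's route---once Bernshtein is granted---reduces B\'ezout to a polytope-volume identity, at the cost of importing a strictly harder theorem as a black box.
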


According to Bezout's theorem, the number of common solutions is
$\xdeg(f_1)\xdeg(f_2)=12$ for the Example \ref{ex:densepoly}; that
is, $s=12$.

When the polynomial system is sparse, Bezout's theorem gives a loose
upper bound, which is still 12 for the Example \ref{ex:sparsepoly}.
On the other hand, Bernshtein's theorem gives a tighter result 9
(this result is exact when the coefficients are independent random
variables) as we will show next.

\subsection{Bernshtein's Theorem}\label{subsec:Berns}
Chapter 7 of \cite{94} (hereafter, we briefly refer as \cite{94}) is
recommended for an excellent introduction and for further
information for Bernshtein's theorem. Here, we first briefly
summarize the rudiments of this theorem.

\subsubsection{Newton Polytopes}

Let $\mathbb{C}^*$ denote the complex field excluding zeros,
$\mathbb{C}^*=\mathbb{C} \backslash \{0\}$. A polytope is the convex
hull of a finite set in $\mathbb{R}^n$ and a polytope with integer
coordinates is called lattice polytope. A Newton polytope is a
lattice polytope defined for a polynomial, which is based on the
exponent vectors of monomials with nonzero coefficients:
\begin{equation*}
    P_i=\textrm{Conv}\left(\mathcal{A}_i\right),
\end{equation*}
where Conv(.) denotes the convex hull of a finite set. \example
\label{ex:SupportSets}The support sets of $f_1$ and $f_2$ in Example
\ref{ex:sparsepoly} are
\begin{eqnarray*}
\mathcal{A}_1&=&\{(1,2),(2,0),(0,2),(0,0)\}\textrm{ and }\\
\mathcal{A}_2&=&\{(3,1),(0,4),(1,1)\},
\end{eqnarray*} respectively. The corresponding Newton polytopes (also called the supports of polynomials) are shown in
Fig. \ref{fig:MinkowskiSum}. \psset{unit=0.8cm}
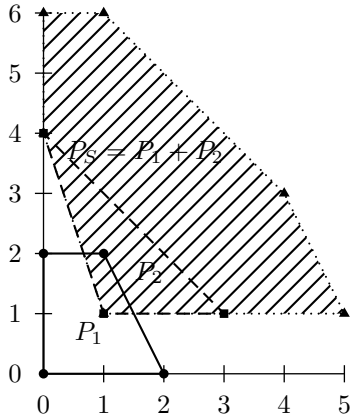
\begin{figure} [!t] \centering
{
\hspace{1cm} % For one column
 \begin{pspicture}(0,-1)(5,6) % For one column
%\psgrid
\psaxes[linewidth=.5pt](5,6) \pspolygon[showpoints=true]
(0,0)(0,2)(1,2)(2,0) \pspolygon[linestyle=dashed,
showpoints=true,dotstyle=square*] (1,1)(0,4)(3,1)
\pspolygon[fillcolor=lightgray,fillstyle=hlines,linestyle=dotted,dotsep=2pt,showpoints=true,dotstyle=triangle*]
(0,6)(1,6)(4,3)(5,1)(1,1)(0,4) \rput[br](1,.5){$P_1$}
\rput[br](2,1.5){$P_2$} \rput[br](3,3.5){$P_S=P_1+P_2$}
\end{pspicture}}
 \caption{Minkowski sum of two Newton polytopes. \label{fig:MinkowskiSum}}
\end{figure}
\subsubsection{Mixed Volume and Minkowski Sum}
\par The mixed volume of Newton polytopes includes Minkowski sum operation of Newton
polytopes, which can be carried on by summing two Newton polytopes
at a time. For example, the Minkowski sum of three Newton polytopes
${P_S=P_1+P_2+P_3}$ can be evaluated in two steps, e.g.,
${P_S=P_{S12}+P_3}$, where ${P_{S12}=P_1+P_2}$. The Minkowski sum of
two Newton polytopes is based on the Minkowski sum of their support
sets, i.e., ${\mathcal{A}_S=\mathcal{A}_1+\mathcal{A}_2}$. Minkowski
sum of two sets is basically adding every element of
${\mathcal{A}_1=\{a_{11},\cdots,a_{1m_1}\}}$ to every element of
${\mathcal{A}_2=\{a_{21},\cdots,a_{2m_2}\}}$:
\begin{equation*}
    \mathcal{A}_S=\{a_{1j}+a_{2k}:a_{1j}\in\mathcal{A}_1\textrm{ and
    }a_{2k}\in\mathcal{A}_2\}.
\end{equation*}
\example The Minkowski sum of two sets in Example
\ref{ex:SupportSets} is as follows:
\begin{eqnarray*}
    \mathcal{A}_S&=&\{(4,3),(1,6),(2,3),\\
    &&(5,1),(2,4),(3,1),\\
    &&(3,3),(0,6),(1,3),\\
    &&(3,1),(0,4),(1,1)\}.
\end{eqnarray*}
Therefore, the Minkowski sum of corresponding two Newton polytopes
$P_S=P_1+P_2$ is found as follows:
\begin{equation*}
    P_S=\textrm{Conv}\left(\mathcal{A}_S\right),
\end{equation*}
which is also shown in Fig. \ref{fig:MinkowskiSum}.

The mixed volume of Newton polytopes has the following general
formula: \par $\textrm{MV}(P_{1},\cdots,P_{n})=$
\begin{equation*}
\underset{k=1}{\overset{n}{\sum }}\left(
-1\right)^{n-k}\underset{\begin{array}{c}
I\subset \{1,\cdots,n\} \\
|I|=k\end{array}}{\sum }\textrm{Vol}\left( \underset{i\in
I}{\sum}P_{i}\right),
\end{equation*}
where Vol(.) and MV(.) denote the volume and mixed volume operators,
respectively. $\sum_{i\in I}P_i$ denotes the Minkowski sum of Newton
polytopes. It can be shown that mixed volume always has a
nonnegative value \cite{94}.
\par As a simple example, consider the mixed
volume of two Newton polytopes:
\begin{equation*}
    \textrm{MV}(P_1,P_2)=-\textrm{Vol}(P_1)-\textrm{Vol}(P_2)+\textrm{Vol}(P_S),
\end{equation*}
where $P_S=P_1+P_2$.
\par Therefore, the mixed volume for the system in Example
\ref{ex:sparsepoly} is found as follows:
\begin{equation*}
    \textrm{MV}(P_1,P_2)=-3-3+15=9.
\end{equation*}
\begin{theorem}[Bernshtein's Theorem - specialized]
Given polynomials $f_1,\cdots , f_n \in
\mathbb{C}\left[x_1,\cdots,x_n\right]$ with common solutions in
$(\mathbb{C}^*)^n$, let $P_i$ be the Newton polytope of $f_i$ in
$\mathbb{R}^n$. For independent random coefficients $c_{ij}$,
$\forall i\in\{1,2,...,n\}$ and $\forall j\in\{1,2,...,m_i\}$, the
number of common solutions is exactly equal to the mixed volume of
Newton polytopes, $\textrm{MV}(P_1,\cdots,P_n)$.
\end{theorem}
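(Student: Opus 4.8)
The plan is to recognize the statement as the Bernstein--Kushnirenko--Khovanskii (BKK) bound and to prove it by a toric degeneration --- equivalently, a polyhedral homotopy --- that reduces the solution count to a purely combinatorial sum of lattice volumes. First I would assemble the elementary properties of the mixed volume $\textrm{MV}(P_1,\cdots,P_n)$ that follow directly from the definition given above: it is symmetric in its arguments, multilinear under Minkowski sum, monotone, and satisfies the normalization $\textrm{MV}(P,\cdots,P)=n!\,\textrm{Vol}(P)$ (the Kushnirenko special case; one checks it from the inclusion--exclusion formula, e.g. $\textrm{MV}(P,P)=-2\,\textrm{Vol}(P)+\textrm{Vol}(2P)=2!\,\textrm{Vol}(P)$). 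These identities are what ultimately let me replace the true count by a sum of normalized volumes of cells in a subdivision of the Minkowski sum $P_1+\cdots+P_n$.

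Next I would introduce a generic one-parameter deformation. Choosing generic integer lifting weights $\omega_i$ on each support $\mathcal{A}_i$ and forming $\hat f_i(x,t)=\sum_j c_{ij}\,t^{\omega_i(a_{ij})}\,x^{a_{ij}}$, the projection of the lower hull of the lifted supports induces a regular \emph{mixed subdivision} of $P_1+\cdots+P_n$. The cells that are Minkowski sums of exactly one edge from each $P_i$ are the \emph{mixed cells}, and a standard combinatorial identity shows that their normalized volumes sum to precisely $\textrm{MV}(P_1,\cdots,P_n)$. I would then track the common solutions of $\hat f_i=0$ in $(\mathbb{C}^*)^n$ as $t\to 0$ using Puiseux series: each branch is governed to leading order by a \emph{binomial} initial system attached to one mixed cell, and the number of torus solutions of that binomial system equals the normalized lattice volume of the cell, computed from the Smith normal form of the matrix of exponent differences. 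Summing over all mixed cells recovers $\textrm{MV}$, and reversing the homotopy ($t\to 1$) transports these to solutions of the original system.

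Two facts remain to be nailed down: the upper bound and the exactness for generic coefficients. The upper bound --- that the number of isolated solutions in $(\mathbb{C}^*)^n$ is at most $\textrm{MV}(P_1,\cdots,P_n)$ --- follows because every isolated solution is the $t\to 1$ endpoint of some continuation path, hence is charged to some mixed cell. For the matching lower bound and for the claim that the count is \emph{exactly} $\textrm{MV}$ with all solutions simple, I would invoke Bernstein's nondegeneracy criterion: the count equals $\textrm{MV}$ precisely when, for every nonzero direction $\gamma$, the face subsystem $\{f_i^{\gamma}=0\}$ built from the $\gamma$-supporting faces has no common zero in $(\mathbb{C}^*)^n$. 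For independent random $c_{ij}$ each such face subsystem is again generic but supported on a lower-dimensional polytope, so by induction on $n$ it has no torus solution almost surely; this is what prevents continuation paths from diverging or collapsing onto the coordinate hyperplanes, so no roots are lost at the toric boundary.

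I expect this last nondegeneracy/genericity step to be the main obstacle. Controlling the behavior of solutions on the toric boundary --- equivalently, proving that for almost all coefficient choices no roots escape ``to infinity'' or onto the strata where some coordinate vanishes --- is exactly the delicate part that upgrades the mere upper bound to the exact BKK count, and it is what forces the hypothesis of \emph{independent} random coefficients. This is also precisely why, in the sequel, we can apply the theorem directly to the single-beam case (where the coefficients are genuinely independent) but must treat the multi-beam case separately, since there the repeated appearance of each $\xH^{[kj]}$ destroys coefficient independence and invalidates this step.
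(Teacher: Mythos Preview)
The paper does not prove this theorem at all. Bernshtein's theorem is stated as a classical result from algebraic geometry and the proof is explicitly deferred to reference~[94] (Chapter~7); the Appendix even remarks that ``the proof that the polynomial system has generically $\textrm{MV}(P_1,\cdots,P_n)$ number of common solutions is shown by induction on $n$ in~[94].'' So there is nothing in the paper to compare your argument against beyond that one-line pointer.

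That said, your sketch is a correct outline of one of the standard modern proofs --- the polyhedral homotopy / toric degeneration argument due essentially to Huber and Sturmfels --- and it is more constructive than the inductive route the paper's reference apparently follows. Your identification of the delicate step is also accurate: the face-system nondegeneracy criterion (no torus solutions of $\{f_i^{\gamma}=0\}$ for any nonzero $\gamma$) is exactly what upgrades the upper bound to equality, and it is indeed the place where independence of the coefficients $c_{ij}$ enters. Your closing remark about why this fails in the multi-beam case matches the paper's own discussion in Section~\ref{sec:GOuterB}. One small caution: your claim that the face subsystem ``has no torus solution almost surely by induction on $n$'' needs a bit more care, since the relevant face systems are overdetermined (same number of equations, fewer effective variables after a monomial change of coordinates), so the argument is really that a generic overdetermined system has empty solution set, not a direct appeal to the inductive BKK count; but this is a standard refinement and does not affect the correctness of the overall plan.
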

\par It can be shown that Bezout's theorem is a special case
of Bernshtein's theorem. That is, mixed volume for an $n$
dimensional dense polynomial system is equal to
$\xdeg(f_1)\xdeg(f_2)\cdots\xdeg(f_n)$ \cite{94}.
\subsection{High Dimensional Polynomial Systems}\label{subsec:High}
\par For high dimensional polynomial systems, there is a nice connection between the facets (an $n-1$
dimensional face is called facet for an $n$ dimensional polytope)
and the volumes of polytopes, which significantly simplifies the
computation of mixed volume. For example, by using facets, the mixed
volume for the following simple 3 dimensional polynomial system is
easily found to be 0:
\begin{eqnarray*}
    f_1=c_{11}x_1^2+c_{12}x_2^2+c_{13}x_3+c_{14}&=&0\\
    f_2=c_{21}x_1^2+c_{22}&=&0\\
    f_3=c_{31}x_1+c_{32}&=&0,
\end{eqnarray*}
where clearly, there is no solution when the coefficients are random
variables. We leave the details of facet approach to \cite{94} since
this is a further detail beyond our scope.
\par Computing the mixed volume by using facets is still cumbersome when the system is a little more complicated even for 3 dimensional polynomial systems.
Therefore, there are several \emph{theoretical approaches} in the
mathematics literature that \emph{lead to algorithms} to compute the
mixed volumes, e.g., \cite{96}. These softwares provide rigorous
mixed volume results for polynomial systems. In the next section, we
use the softwares mentioned in \cite{94} to compute the mixed
volumes for some important cases.

 \section{Rigorous Connection Between
Proper and Feasible Systems - Bezout's and Bernshtein's Theorems}
\label{sec:Direct}

As mentioned in the previous sections, we can use Bernshtein's
theorem in order to indirectly show that the corresponding
polynomial system for a single beam case is solvable (not solvable)
almost surely if the mixed volume for that system is nonzero (zero).
Once again, note that the coefficients must be generic in order to
use Bernshtein's theorem. Next, we apply this theorem for some
important systems.

\example For the systems $(2\times 3, 1)^4$ and ${(2\times 3,
1)^2(3\times 2, 1)^2}$, the mixed volumes are 9 and 8, respectively.
In other words, these polynomial systems with independent random
coefficients are solvable almost surely since for each system, mixed
volume is nonzero.

\example \label{ex:infeasible} Now, consider the system ${(2\times
2,1)^3(3\times 5,1)}$, which is infeasible according to the
simulation result. Since the subset of equations, which is obtained
by shutting down the fourth receiver has 9 equations and 8
variables, this system is improper. The mixed volume for this system
is 0. In other words, the corresponding polynomial system with
independent random coefficients is not solvable almost surely.
\par Note once again that we only provide mixed volumes for only some
important cases and also note that mixed volume computation is
${\#\textrm{P-complete}}$ \cite{94}.
\par Bernshtein's theorem applies to a polynomial system with independent random coefficients and with
equal number of equations and variables ($N_e=N_v$). When the number
of equations is greater than the number of variables ($N_e>N_v$), it
can be argued that there is no solution almost surely by using
Bernshtein's theorem as follows. First, note that the equations in
the polynomial system are independent for a single beam case (the
coefficients are independent random variables) and with general MIMO
channels (the polynomial system has no structure). Second, suppose
that we apply Bernshtein's theorem to only $N_v$ polynomials of all
$N_e$ polynomials. Since the mixed volume for these $N_v$
polynomials is finite, the number of solutions for these $N_v$
polynomials is also finite. If the mixed volume is equal to zero,
then there is no solution for these polynomials, and hence there is
no solution for the whole polynomial system. If there are finite
number of solutions for these polynomials, then these solutions
cannot satisfy the rest of the polynomials with probability one
since these polynomials are independent with the rest of the
polynomials. Therefore, there is no solution for the whole
polynomial system almost surely.

\section{New Closed Form Solutions}\label{sec:NewClosedForm}

The closed form solutions of interference alignment for MIMO
interference networks with constant channel coefficients are known
for the cases including 3-user interference network with $M=N$
antennas at each node \cite{85} and the symmetric ${(4\times
8,3)(4\times 8,2)^3}$ system\cite{89}. These closed form solutions
share a common structure: If two interference vectors are aligned at
two different receivers, then there exists an eigenvector solution.
Motivated by this structure, we provide the solutions for the
systems $(2\times3, 1)^2 (3 \times 2, 1)^2$ and $(2\times3, 1)^4$ in
this section. Note that we proved these systems are feasible by
computing the mixed volumes in the previous section. Since for both
systems, each transmitter sends only one beam, we hereafter drop the
subscript ``1" for convenience, i.e., $\xv^{[i]} $ and $\xu^{[i]}$
denote the beamforming vectors of the $i^{th}$ transmitter and
receiver, respectively.

\subsection{Asymmetric $(2\times3, 1)^2 (3 \times 2,1)^2$ System}
We first consider receiver 3 and 4. At receiver 3, the interference
is nulled by the receive filter $\xu^{[3]}$, i.e.,
\begin{eqnarray}\label{eqn:intcondrx3}
\xu^{[3]}\xH^{[31]}\xv^{[1]}=\xu^{[3]}\xH^{[32]}\xv^{[2]}=\xu^{[3]}\xH^{[34]}\xv^{[4]}=0.
\end{eqnarray}
To satisfy the above condition, we align the interference from
transmitter 1 and 2 along the same dimension at receiver 3, i.e.,
\begin{eqnarray}\label{eqn:v1v2atrx3}
\hspace{-.5cm}\text{span}\left(\xH^{[31]}\xv^{[1]}\right)&=&\text{span}\left(\xH^{[32]}\xv^{[2]}\right)\\
\hspace{-.5cm}\Rightarrow
\text{span}\left(\left(\xH^{[32]}\right)^{-1}\xH^{[31]}\xv^{[1]}\right)&=&\text{span}\left(\xv^{[2]}\right)\nonumber,
\end{eqnarray}
where span($\cdot$) denotes the space spanned by the columns of a
matrix.
\par Similarly, at receiver 4, the interference is nulled by
the receive filter $\xu^{[4]}$, i.e.,
\begin{eqnarray}\label{eqn:intcondrx4}
\xu^{[4]}\xH^{[41]}\xv^{[1]}=\xu^{[4]}\xH^{[42]}\xv^{[2]}=\xu^{[4]}\xH^{[43]}\xv^{[3]}=0.
\end{eqnarray}
To satisfy the above condition, we align the interference from
transmitter 1 and 2 along the same dimension at receiver 4, i.e.,
\begin{eqnarray}
\hspace{-.5cm}\text{span}\left(\xH^{[41]}\xv^{[1]}\right)&=&\hspace{-.2cm}\text{span}\left(\xH^{[42]}\xv^{[2]}\right)\label{eqn:v1v2atrx4}\\
\hspace{-.5cm}\Rightarrow
\text{span}\left(\left(\xH^{[42]}\right)^{-1}\xH^{[41]}\xv^{[1]}\right)&=&\text{span}\left(\xv^{[2]}\right).\nonumber
\end{eqnarray}
Notice that both $\xv^{[1]}$ and $\xv^{[2]}$ have to satisfy
\eqref{eqn:v1v2atrx3} and \eqref{eqn:v1v2atrx4}, from which we
obtain
\begin{eqnarray*}
\text{span}\left(\xv^{[1]}\right)=\text{span}\left(\left(\xH^{[41]}\right)^{-1}\xH^{[42]}\left(\xH^{[32]}\right)^{-1}\xH^{[31]}\xv^{[1]}\right).
\end{eqnarray*}
Therefore, we find that $\xv^{[1]}$ is the eigenvector of
$\left(\xH^{[41]}\right)^{-1}\xH^{[42]}\left(\xH^{[32]}\right)^{-1}\xH^{[31]}$.
Then, we determine $\xv^{[2]}$ from \eqref{eqn:v1v2atrx4}. Since
$\xv^{[1]}$ and $\xv^{[2]}$ are determined, we can determine the
receive filters $\xu^{[3]}$ and $\xu^{[4]}$ from
\eqref{eqn:intcondrx3} and \eqref{eqn:intcondrx4}. The interference
alignment we presented so far is shown in Fig.
\ref{fig:IntAlRec3and4}.
\begin{figure} [!t] \centering
{
%\hspace{1cm} % For one column
\begin{pspicture}(0,0)(7,6.4) % For one column
%\psgrid

\psframe(1,4.8)(1.6,5.8) \pscircle(1.3,5){.1} \pscircle(1.3,5.6){.1}
\pnode(0,5.3){Tx1} \pnode(1,5){Tx1a}\pnode(1,5.6){Tx1b}
\pnode(1.6,5.3){Tx1c} \rput[l](.1,6){\small{$\xv^{[1]}$}}
\psline[linecolor=red]{->}(.8,5)(.4,5.6)

\psframe(5.4,4.8)(6,5.8) \pscircle(5.7,5){.1} \pscircle(5.7,5.3){.1}
\pscircle(5.7,5.6){.1} \pnode(7,5.3){Rx1}
\pnode(6,5){Rx1a}\pnode(6,5.3){Rx1b}\pnode(6,5.6){Rx1c}
\pnode(5.4,5.3){Rx1d}

\psframe(1,3.2)(1.6,4.2) \pscircle(1.3,3.4){.1} \pscircle(1.3,4){.1}
\pnode(0,3.7){Tx2} \pnode(1,3.4){Tx2a}\pnode(1,4){Tx2b}
\pnode(1.6,3.7){Tx2c} \rput[l](.1,4.0){\small{$\xv^{[2]}$}}
\psline[linecolor=red]{->}(.2,3.4)(.8,4)

\psframe(5.4,3.2)(6,4.2) \pscircle(5.7,3.4){.1}
\pscircle(5.7,3.7){.1} \pscircle(5.7,4){.1} \pnode(7,3.7){Rx2}
\pnode(6,3.4){Rx2a}\pnode(6,3.7){Rx2b}\pnode(6,4){Rx2c}
\pnode(5.4,3.7){Rx2d}

\psframe(1,1.6)(1.6,2.6) \pscircle(1.3,1.8){.1}
\pscircle(1.3,2.1){.1} \pscircle(1.3,2.4){.1} \pnode(0,2.1){Tx3}
\pnode(1,1.8){Tx3a}\pnode(1,2.4){Tx3b} \pnode(1.6,2.1){Tx3c}

\psframe(5.4,1.6)(6,2.6) \pscircle(5.7,1.8){.1}
\pscircle(5.7,2.4){.1} \pnode(7,2.1){Rx3}
\pnode(6,1.8){Rx3a}\pnode(6,2.1){Rx3b}\pnode(6,2.4){Rx3c}
\pnode(5.4,2.1){Rx3d} \psline[linecolor=red]{->}(6.2,1.8)(6.5,2.4)
\psline[linecolor=red]{->}(6.2,1.8)(6.6,2.6)
\psline[linestyle=dashed]{->}(6.2,1.8)(6.8,1.6)
\rput[l](6.8,2.1){\small{$\xu^{[3]}$}}

\psframe(1,0)(1.6,1) \pscircle(1.3,.2){.1}\pscircle(1.3,.5){.1}
\pscircle(1.3,.8){.1} \pnode(0,.5){Tx4}
\pnode(1,.2){Tx4a}\pnode(1,.8){Tx4b} \pnode(1.6,.5){Tx4c}

\psframe(5.4,0)(6,1) \pscircle(5.7,.2){.1} \pscircle(5.7,.8){.1}
\pnode(7,.5){Rx4}
\pnode(6,.2){Rx4a}\pnode(6,.5){Rx4b}\pnode(6,.8){Rx4c}
\pnode(5.4,.5){Rx4d} \psline[linecolor=red]{->}(6.7,.8)(6.75,.3)
\psline[linecolor=red]{->}(6.7,.8)(6.8,0)
\psline[linestyle=dashed]{->}(6.7,.8)(6.2,.7)
\rput[l](6.1,1.0){\small{$\xu^{[4]}$}}

\ncline{->}{Tx1c}{Rx1d}\ncline{->}{Tx1c}{Rx2d}\ncline[linecolor=red]{->}{Tx1c}{Rx3d}\ncline[linecolor=red]{->}{Tx1c}{Rx4d}
\ncline{->}{Tx2c}{Rx2d}\ncline{->}{Tx2c}{Rx1d}\ncline[linecolor=red]{->}{Tx2c}{Rx3d}\ncline[linecolor=red]{->}{Tx2c}{Rx4d}
\ncline{->}{Tx3c}{Rx3d}\ncline{->}{Tx3c}{Rx1d}\ncline{->}{Tx3c}{Rx2d}\ncline{->}{Tx3c}{Rx4d}
\ncline{->}{Tx4c}{Rx4d}\ncline{->}{Tx4c}{Rx1d}\ncline{->}{Tx4c}{Rx2d}\ncline{->}{Tx4c}{Rx3d}
\end{pspicture}
 \caption{Interference alignment between transmitter 1 and 2, and receiver 3 and 4.} \label{fig:IntAlRec3and4}}
\end{figure}
\par Now, we consider the interference from transmitter
4 at receiver 3. Since $\xu^{[3]}$ is already fixed (designed), we
need to zero-force the interference from transmitter 4 at receiver
3, i.e.,
\begin{eqnarray}\label{eqn:ZFRx3}
\xu^{[3]}\xH^{[34]}\xv^{[4]}=0,
\end{eqnarray}
which implies that $\xv^{[4]}$ lies in the null space of
$\xu^{[3]}\xH^{[34]}$. Since $\xu^{[3]}\xH^{[34]}$ is a $1 \times 3$
row vector, it has a null space with dimension 2. Let two $3 \times
1$ vectors $\mathbf{n}_{41}$ and $\mathbf{n}_{42}$ form the basis of
that null space. Then $\xv^{[4]}$ can be expressed as
\begin{eqnarray}\label{eqn:v4}
\xv^{[4]}=v_4'(1)\mathbf{n}_{41}+v_4'(2)\mathbf{n}_{42}=\underbrace{[\mathbf{n}_{41}~
\mathbf{n}_{42}]}_{\mathbf{N}_4}\underbrace{\left[\begin{array}{c}v'_4(1)\\v'_4(2)\end{array}\right]}_{\xv'^{[4]}},
\end{eqnarray}
where $\mathbf{N}_4$ is a $3 \times 2$ matrix and $\xv'^{[4]}$ is a
$2\times 1$ vector. Plugging \eqref{eqn:v4} into \eqref{eqn:ZFRx3},
we obtain
\begin{eqnarray*}
\xu^{[3]}\xH^{[34]}\mathbf{N}_4 \xv'^{[4]}=0.
\end{eqnarray*}
Instead of designing a $3\times 1$ vector $\xv^{[4]}$, now we need
to design a $2\times 1$ vector $\xv'^{[4]}$. Equivalently, we can
also think that transmitter 4 loses 1 antenna, which is illustrated
in Fig. \ref{fig:Tx4}. We leave $\xv'^{[4]}$ to be determined later.
\begin{figure}[!t]
\centering \subfigure[Transmitter 4.] % caption for subfigure a
{ \label{fig:Tx4}
\begin{pspicture}(0,0)(2.7,1.5) % For one column
%\psgrid
\psframe(.8,0)(1.4,1) \cnode(1.1,.2){.1}{A} \cnode(1.1,.8){.1}{B}
\psframe(2.4,0)(3,1) \cnode(2.7,.2){.1}{C} \cnode(2.7,.5){.1}{D}
\cnode(2.7,.8){.1}{E} \ncline{A}{C}\ncline{A}{D}\ncline{A}{E}
\ncline{B}{C}\ncline{B}{D}\ncline{B}{E} \pnode(0,.5){Tx1a}
\ncline[linestyle=dashed]{Tx1a}{A}\ncline[linestyle=dashed]{Tx1a}{B}
\psellipse[linestyle=dashed](.5,.5)(.2,.5)
\psellipse[linestyle=dashed](1.9,.5)(.2,.5)
\rput[c](.5,1.2){\small{$\xv^{\prime[4]}$}}
\rput[c](1.9,1.2){\small{$\mathbf{N}_4$}}
\end{pspicture}
} \centering \subfigure[Receiver 1.] % caption for subfigure a
{ \label{fig:Rx1}
\begin{pspicture}(0,0)(3.8,1.2) % For one column
%\psgrid
\psframe(.8,0)(1.4,1) \cnode(1.1,.2){.1}{A} \cnode(1.1,.5){.1}{B}
\cnode(1.1,.8){.1}{C} \psframe(2.4,0)(3,1) \cnode(2.7,.2){.1}{D}
\cnode(2.7,.8){.1}{E} \ncline{A}{D}\ncline{A}{E}\ncline{B}{D}
\ncline{B}{E}\ncline{C}{D}\ncline{C}{E} \pnode(3.8,.5){Tx1a}
\ncline[linestyle=dashed]{Tx1a}{D}\ncline[linestyle=dashed]{Tx1a}{E}
\psellipse[linestyle=dashed](3.3,.5)(.2,.5)
\psellipse[linestyle=dashed](1.9,.5)(.2,.5)
\rput[c](3.3,1.2){\small{$\xu^{\prime[1]}$}}
\rput[c](1.9,1.2){\small{$\mathbf{P}_1$}}
\end{pspicture}
}
 \caption{Equivalent representations of transmitter 4 $\left(\xv^{[4]}=\mathbf{N}_4\xv'^{[1]}\right)$ and receiver 1 $\left(\xu^{[1]}=\xu'^{[1]}\mathbf{P}_1\right)$.}
 \label{fig:EqReps}
\end{figure}
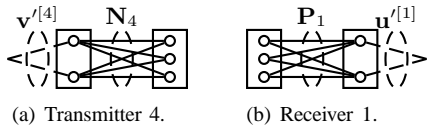
\par Now, we consider the interference from transmitter
3 at receiver 4. Since $\xu^{[4]}$ is already fixed (designed), we
need to zero-force the interference from transmitter 3 at receiver
4, i.e.,
\begin{eqnarray}\label{eqn:ZFRx4}
\xu^{[4]}\xH^{[43]}\xv^{[3]}=0,
\end{eqnarray}
which implies that $\xv^{[3]}$ lies in the null space of
$\xu^{[4]}\xH^{[43]}$. Therefore, following the same approach as
$\xv^{[4]}$, $\xv^{[3]}$ can be expressed as
\begin{eqnarray*}
\xv^{[3]}= \mathbf{N}_3 \xv'^{[3]},
\end{eqnarray*}
where $\mathbf{N}_3$ is a $3 \times 2$ matrix whose columns form the
basis of the null space of $\xu^{[4]}\xH^{[43]}$ and $\xv'^{[3]}$ is
a $2\times 1$ vector. Again, instead of designing a $3\times 1$
vector $\xv^{[3]}$, now we need to design a $2\times 1$ vector
$\xv'^{[3]}$. Similarly, we can also think that transmitter 3 loses
1 antenna.
\par So far, we have considered all interference at receiver
3 and 4. As a result, we determined
$\xv^{[1]},\xv^{[2]},\xu^{[3]},\textrm{ and } \xu^{[4]}$. We left
$\xv'^{[3]}$ and $\xv'^{[4]}$ to be determined later. Now, let us
consider receiver 1 and 2.
\par At receiver 1 and 2, the interference is nulled by the receive
filters $\xu^{[1]}$ and $\xu^{[2]}$, i.e.,
\begin{equation}
\xu^{[1]}\xH^{[12]}\xv^{[2]}=\xu^{[1]}\xH^{[13]}\xv^{[3]}=\xu^{[1]}\xH^{[14]}\xv^{[4]}=0
\label{eqn:intcondrx1}
\end{equation}
and
\begin{equation}
\xu^{[2]}\xH^{[21]}\xv^{[1]}=\xu^{[2]}\xH^{[23]}\xv^{[3]}=\xu^{[2]}\xH^{[24]}\xv^{[4]}=0\label{eqn:intcondrx2},
\end{equation}
respectively. As we did previously, one may directly want to
determine $\xv^{[3]}$, $\xv^{[4]}$, $\xu^{[1]}$ and $\xu^{[2]}$ from
the above equations by writing similar alignment conditions in
\eqref{eqn:v1v2atrx3} and \eqref{eqn:v1v2atrx4}. However, notice
that $\xv^{[1]}$ and $\xv^{[2]}$ are already fixed (designed).
Therefore, we need to null the interference from transmitter 2 and
1, i.e.,
\begin{equation*}
\xu^{[1]}\xH^{[12]}\xv^{[2]}=0
\end{equation*}
and
\begin{equation*}
\xu^{[2]}\xH^{[21]}\xv^{[1]}=0,
\end{equation*}
respectively. The above equations imply that $\xu^{[1]}$ and
$\xu^{[2]}$ lay in the left null space of $\xH^{[12]}\xv^{[2]}$ and
$\xH^{[21]}\xv^{[1]}$, respectively. Let $\mathbf{P}_1$ be a $2
\times 3$ matrix whose two rows are orthogonal to
$\xH^{[12]}\xv^{[2]}$. Then, $\xu^{[1]}$ can be expressed as
\begin{eqnarray*}
\xu^{[1]}= \xu'^{[1]} \mathbf{P}_1,
\end{eqnarray*}
where $\xu'^{[1]}$ is a $1 \times 2$ vector. As a result, instead of
designing a $1 \times 3$ vector $\xu^{[1]}$, we need to design a
$1\times 2$ vector $\xu'^{[1]}$. Equivalently, we can also think
that receiver 1 loses 1 antenna, which is illustrated in Fig.
\ref{fig:Rx1}. Similarly, $\xu^{[2]}$ can be expressed as
\begin{eqnarray*}
\xu^{[2]}= \xu'^{[2]} \mathbf{P}_2,
\end{eqnarray*}
where $\xu'^{[2]}$ is a $1\times 2$ vector and $\mathbf{P}_2$ is a
$2 \times 3$ matrix whose two rows are orthogonal to
$\xH^{[21]}\xv^{[1]}$.
\par Now, the interference alignment
conditions (\ref{eqn:intcondrx1}) and (\ref{eqn:intcondrx2}) are
equivalent to
\begin{equation}\label{eqn:v3v4atrx1}
\xu'^{[1]}\underbrace{\mathbf{P}_1\xH^{[13]}\mathbf{N}_3}_{\xH'^{[13]}}
\xv'^{[3]}=\xu'^{[1]}\underbrace{\mathbf{P}_1\xH^{[14]}\mathbf{N}_4}_{\xH'^{[14]}}
\xv'^{[4]}=0,
\end{equation}
and
\begin{equation}\label{eqn:v3v4atrx2}
\xu'^{[2]}\underbrace{\mathbf{P}_2\xH^{[23]}\mathbf{N}_3}_{\xH'^{[23]}}
\xv'^{[3]}=\xu'^{[2]}\underbrace{\mathbf{P}_2\xH^{[24]}\mathbf{N}_4}_{\xH'^{[24]}}
\xv'^{[4]}=0,
\end{equation}
respectively where $\xH'^{[13]}$, $\xH'^{[14]}$, $\xH'^{[23]}$, and
$\xH'^{[24]}$ are $2 \times 2$ matrices.
\par Similar to $\xv^{[1]}$, we find that $\xv'^{[3]}$ is the
eigenvector of
$\left(\xH'^{[23]}\right)^{-1}\xH'^{[24]}\left(\xH'^{[14]}\right)^{-1}\xH'^{[13]}$.
Then, we determine $\xv'^{[4]}$ and $\xu'^{[1]}$ from
(\ref{eqn:v3v4atrx1}) and $\xu'^{[2]}$ from (\ref{eqn:v3v4atrx2}).
The interference alignment we presented for receiver 1 and 2 is
shown in Fig. \ref{fig:IntAlRec1and2}.
\par As a result, we completed designing all transmit and receive beamforming filters in the $(2\times3, 1)^2 (3 \times 2, 1)^2$ system.
\begin{figure} [!t] \centering {
%\hspace{1cm} % For one column
\begin{pspicture}(0,0)(7.25,6.4) % For one column
%\psgrid

\psframe(2.2,4.8)(2.8,5.8) \pscircle(2.5,5){.1}
\pscircle(2.5,5.6){.1} \pnode(2.8,5.3){Tx1c}

\psframe(4.2,4.8)(4.8,5.8) \cnode(4.5,5){.1}{1A}
\cnode(4.5,5.3){.1}{1B} \cnode(4.5,5.6){.1}{1C} \pnode(5.8,5.3){Rx1}
\pnode(4.2,5.3){Rx1d} \rput[c](7.1,5.6){\small{$\xu'^{[1]}$}}
\psframe(5.4,4.8)(6,5.8) \cnode(5.7,5){.1}{1D}
 \cnode(5.7,5.6){.1}{1E} \ncline{1A}{1D}\ncline{1A}{1E}
 \ncline{1B}{1D}\ncline{1B}{1E}\ncline{1C}{1D}\ncline{1C}{1E}
 \psellipse[linestyle=dashed](5.1,5.3)(.2,.5)

\psframe(2.2,3.2)(2.8,4.2) \pscircle(2.5,3.4){.1}
\pscircle(2.5,4){.1} \pnode(2.8,3.7){Tx2c}

\psframe(4.2,3.2)(4.8,4.2) \cnode(4.5,3.4){.1}{2A}
\cnode(4.5,3.7){.1}{2B} \cnode(4.5,4){.1}{2C} \pnode(4.2,3.7){Rx2d}
\rput[c](7.1,3.6){\small{$\xu'^{[2]}$}} \psframe(5.4,3.2)(6,4.2)
\cnode(5.7,3.4){.1}{2D} \cnode(5.7,4){.1}{2E}
\ncline{2A}{2D}\ncline{2A}{2E}
 \ncline{2B}{2D}\ncline{2B}{2E}\ncline{2C}{2D}\ncline{2C}{2E}
 \psellipse[linestyle=dashed](5.1,3.7)(.2,.5)

\psframe(2.2,1.6)(2.8,2.6) \cnode(2.5,1.8){.1}{3A}
\cnode(2.5,2.1){.1}{3B} \cnode(2.5,2.4){.1}{3C}
\pnode(2.8,2.1){Tx3c} \rput[l](.3,2.8){\small{$\xv'^{[3]}$}}
\psline[linecolor=blue]{->}(.8,.4)(.4,.8) \psframe(1,1.6)(1.6,2.6)
\cnode(1.3,1.8){.1}{3D} \cnode(1.3,2.4){.1}{3E}
\ncline{3A}{3D}\ncline{3A}{3E}
 \ncline{3B}{3D}\ncline{3B}{3E}\ncline{3C}{3D}\ncline{3C}{3E}
 \psellipse[linestyle=dashed](1.9,2.1)(.2,.5)

\psframe(4.2,1.6)(4.8,2.6) \pscircle(4.5,1.8){.1}
\pscircle(4.5,2.4){.1} \pnode(4.2,2.1){Rx3d}
\psline[linecolor=blue]{->}(6.4,5)(6.25,5.6)
\psline[linecolor=blue]{->}(6.4,5)(6.2,5.8)
\psline[linestyle=dashed]{->}(6.4,5)(7,5.2)

\psframe(2.2,0)(2.8,1) \cnode(2.5,.2){.1}{4A}\cnode(2.5,.5){.1}{4B}
\cnode(2.5,.8){.1}{4C} \pnode(2.8,.5){Tx4c}
\rput[l](.3,1.2){\small{$\xv'^{[4]}$}}
\psline[linecolor=blue]{->}(.2,2)(.8,2.4) \psframe(1,0)(1.6,1)
\cnode(1.3,.2){.1}{4D} \cnode(1.3,.8){.1}{4E}
\ncline{4A}{4D}\ncline{4A}{4E}
\ncline{4B}{4D}\ncline{4B}{4E}\ncline{4C}{4D}\ncline{4C}{4E}
 \psellipse[linestyle=dashed](1.9,.5)(.2,.5)

\psframe(4.2,0)(4.8,1) \pscircle(4.5,.2){.1} \pscircle(4.5,.8){.1}
\pnode(4.2,.5){Rx4d}

\psline[linecolor=blue]{->}(6.2,3.4)(6.5,4.2)
\psline[linecolor=blue]{->}(6.2,3.4)(6.43,4)
\psline[linestyle=dashed]{->}(6.2,3.4)(6.8,3.2)

\ncline{->}{Tx1c}{Rx1d}\ncline{->}{Tx1c}{Rx2d}\ncline{->}{Tx1c}{Rx3d}\ncline{->}{Tx1c}{Rx4d}
\ncline{->}{Tx2c}{Rx2d}\ncline{->}{Tx2c}{Rx1d}\ncline{->}{Tx2c}{Rx3d}\ncline{->}{Tx2c}{Rx4d}
\ncline{->}{Tx3c}{Rx3d}\ncline[linecolor=blue]{->}{Tx3c}{Rx1d}\ncline[linecolor=blue]{->}{Tx3c}{Rx2d}\ncline{->}{Tx3c}{Rx4d}
\ncline{->}{Tx4c}{Rx4d}\ncline[linecolor=blue]{->}{Tx4c}{Rx1d}\ncline[linecolor=blue]{->}{Tx4c}{Rx2d}\ncline{->}{Tx4c}{Rx3d}
\end{pspicture}
 \caption{Interference alignment between transmitter 3 and 4, and receiver 1 and 2.}\label{fig:IntAlRec1and2}}
\end{figure}
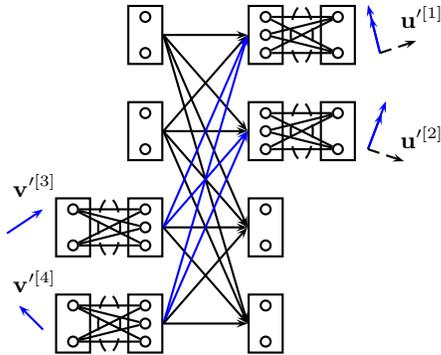

\subsection{Symmetric $(2\times 3, 1)^4$ System}

It is difficult to directly express the closed form solution for the
$(2\times 3, 1)^4$ system. However, by first presenting a closed
form solution for the $(2\times 4, 1)(2\times 3, 1)^3$ system, where
there is an extra freedom (i.e., an extra receive antenna), we show
the solution for the $(2\times 3, 1)^4$ system.

For the $(2\times 4, 1)(2\times 3, 1)^3$ system, suppose that we
randomly pick the beamforming vector $\mathbf{v}^{[1]}$ at
transmitter 1. To eliminate the interference caused by transmitter 1
at receiver 2, 3 and 4, each receiver needs to discard the dimension
occupied by this interference. In other words, each receiver only
uses the 2 dimensional subspace, which is orthogonal to the
direction of the interference caused by transmitter 1. Equivalently,
we can think that each of the receivers 2, 3, and 4 loses 1 antenna
as illustrated in the previous subsection. Now, the transmitter and
receiver pairs 2, 3, and 4 are equivalent to a 3-user interference
channel with 2 antennas at each node for which the closed form
solution is known \cite{85}. Finally, since the receiver 1 has 4
antennas, it can separate the desired signal by zero-forcing all the
interference. As a result, each user achieves 1 DoF. We omit the
explicit closed form solution for this system due to brevity concern
of the paper.

Next, we intuitively show that the $(2\times 3, 1)^4$ system has
also a closed form solution by using the $(2\times 4, 1)(2\times 3,
1)^3$ system for which we previously showed that there exists a
closed form solution. Now, for the $(2\times 4, 1)(2\times 3, 1)^3$
system, consider the receive beamforming vector at receiver 1:
$$\mathbf{u}^{[1]}=\left[
\begin{array}{c}
1 \\
u_{1}\\
u_{2}\\
u_{3}\\
\end{array}%
\right].$$ An extra variable at receiver 1 due to an extra antenna
gives us a freedom to chose it arbitrarily. We pick $u_3$ as this
extra variable. However, instead of choosing $u_3$ arbitrarily, we
choose $u_3$ in terms of other variables by iteratively solving it
from the equations that it is involved\footnote{The iterative
solution for the variable $u_3$ can be clearly seen in the
corresponding polynomial system of the $(2\times 4, 1)(2\times 3,
1)^3$ system.}: \par $\mathbf{u}^{[1]}$ is involved in 3 equations,
which are
$$E_{11}^{12}, E_{11}^{13}, \textrm{ and } E_{11}^{14}$$
(the equations from transmitter 2, 3, and 4 to receiver 1). We
iteratively solve $u_3$ in terms of other variables, which are
$$v_2, v_3, \textrm{ and }v_4$$ \big($\mathbf{v}^{[2]}=\left[
\begin{array}{c}
1 \\
v_{2}%
\end{array}%
\right] $, $\mathbf{v}^{[3]}=\left[
\begin{array}{c}
1 \\
v_{3}%
\end{array}%
\right] $, and $\mathbf{v}^{[4]}=\left[
\begin{array}{c}
1 \\
v_{4}%
\end{array}%
\right] $\big).\\

As a result, eliminating the extra variable $u_3$ by solving it
iteratively in terms of other variables provides us the solution for
the $(2\times 3, 1)^4$ system, although it is difficult to express
it in a closed form.

\section{Multi-Beam Cases}\label{sec:GOuterB}

The solvability of polynomial systems for the multi-beam cases is
more involved as explained in Section \ref{sec:BTheorem}. Only the
proper system definition itself cannot state the feasibility of
these cases since this definition does not consider the dependency
of coefficients. Note that even the current advancements in
algebraic geometry are insufficient for these cases. At this point,
we can use information theoretic outer bounds (general and
cooperative outer bounds), which we explain next, in addition to our
proper system condition to test the feasibility of these cases.

It is well known that for a point to point MIMO channel with $M$
transmit and $N$ receive antennas, DoF is $\min(M,N)$ \cite{36,93}.
In addition, for a 2-user MIMO interference channel with
$M^{[1]},M^{[2]}$ and $N^{[1]},N^{[2]}$ antennas at transmitters and
receivers, respectively, it is well known that DoF is
$\min\big(M^{[1]}+M^{[2]}, N^{[1]}+N^{[2]},
\max(M^{[1]},N^{[2]}),\max(M^{[2]},N^{[1]})\big)$ \cite{93}. These
two results serve as general DoF outer bounds for a $K\text{-user}$
MIMO interference network:
\begin{align}
d^{[i]}&\leq\min(M^{[i]},N^{[i]})\label{eqn:outerbound1}\\
d^{[i]}+d^{[j]}&\leq \min\big(M^{[i]}+M^{[j]}, N^{[i]}+N^{[j]},\notag\\
&\max(M^{[i]},N^{[j]}),\max(M^{[j]},N^{[i]})\big)\label{eqn:outerbound2}\\
\textrm{ for all } i,j\in\mathcal{K}&. \nonumber
\end{align}

As mentioned before, we assume that the first condition
\eqref{eqn:outerbound1} is always satisfied even if it is not
explicitly stated every time throughout the paper.

\example Although the $(3\times3, 2)^2$ system is proper, this
system is almost surely infeasible since it does not satisfy the
general outer bound \eqref{eqn:outerbound2}.

Another outer bound, which is trivially obtained by using
\eqref{eqn:outerbound2} is the cooperative outer bound. That is, the
general outer bound \eqref{eqn:outerbound2} can also be used for all
combinations of cooperation \emph{within} transmitters and
\emph{within} receivers in a $K\text{-user}$ MIMO interference
network. If the general outer bound \eqref{eqn:outerbound2} is not
satisfied for any of these combinations, then the system is almost
surely infeasible.

\example Consider a 4-user MIMO interference network with
$M^{[1]},\cdots,M^{[4]}$ antennas and $N^{[1]},\cdots,N^{[4]}$
antennas at transmitters and receivers, respectively. The general
outer bound \eqref{eqn:outerbound2} can also be used for the 3-user
cooperative case of this network with
$M^{[1]}+M^{[2]},M^{[3]},M^{[4]}$ and ${N^{[1]}+N^{[2]}}$,
$N^{[3]},N^{[4]}$ antennas at transmitters and receivers,
respectively. In addition, it can also be used for the 2-user
cooperative case of this network with
$M^{[1]}+M^{[2]},M^{[3]}+M^{[4]}$ and ${N^{[1]}+N^{[2]}}$,
$N^{[3]}+N^{[4]}$ antennas at transmitters and receivers,
respectively. These are the only 2 cooperative cases of the original
4-user network and the general outer bound \eqref{eqn:outerbound2}
can be checked for all cooperative cases.

\example Consider the $(3\times4,2)(1\times3,1)(10\times4,2)$
system, which is proper and which satisfies the general outer bound
\eqref{eqn:outerbound2}. Now, consider the cooperative case between
the first and second users; that is, consider the
$(4\times7,3)(10\times4,2)$ system. Since the general outer bound
\eqref{eqn:outerbound2} for this cooperative case (briefly,
cooperative outer bound) is not satisfied, this system is almost
surely infeasible.

Next, we list some examples for ${K\textrm{-user}}$ interference
networks with more than $K$ DoF, all of which satisfy the general
outer bound \eqref{eqn:outerbound2} and the cooperative outer bound.
We discuss the feasibility or infeasibility of these systems
depending on the proper system condition. These examples highlight
the usefulness of inequalities from Theorems \ref{theo:symmetric}
and \ref{theo:asymmetric}, and the usefulness of grouping.

\example \label{ex:5x5,2}Consider the $\left(5\times 5, 2\right)^4$
system. There are ${N_e=48}$ equations in total; therefore, there
are $2^{48}-1$ subsets of equations. Testing each of them could be
very challenging due to the proper system definition
\eqref{eqn:proper}. However, the system is easily seen to be proper
from Theorem \ref{theo:symmetric} since ${M+N-(K+1)d=5+5-10=0}$.
Note that $(2\times 8,2)^4,(3\times 7,2)^4,(4\times 6,2)^4,(5\times
5,2)^4,(6\times 4,2)^4,$ $(7\times 3,2)^4, \textrm{ and } (8\times
2, 2)^4$ all belong to the same group, where any system in the group
can be obtained by successively transferring an antenna between
transmitters and receivers.

\example \label{ex:improperasymmetric} Consider the $(5\times
5,3)(5\times 5,2)^3$ system. There are $N_e=60$ equations in total;
therefore, there are ${2^{60}-1}$ subsets of equations. Testing each
of them could be very challenging due to the proper system
definition \eqref{eqn:proper}. However, the system is easily seen to
be improper from Theorem \ref{theo:asymmetric} since the total
number of variables $N_v=48$ is less than the total number of
equations $N_e=60$.

Finally, note that there are two important features in a polynomial
system that can lead the polynomial system to solvability or
non-solvability: The coefficients and the structure of polynomial
system. That is, one can lead the system to solvability or
non-solvability by deliberately selecting the coefficients or by
deliberately introducing a structure to the polynomial system.
Bernshtein's theorem captures the structure of a multivariate
polynomial system by finding the mixed volume of the Newton
polytopes of the multivariate polynomial system. Therefore,
Bernshtein's theorem only requires the independency of coefficients.
Otherwise, if the coefficients are dependent, Bernshtein's theorem
provides only an upper bound for the number of solutions of a
multivariate polynomial system. Following a similar argument, due to
the nature of our proper system definition, the proper system
condition cannot handle the cases when the coefficients are
dependent or when the polynomial system has a certain structure. For
the latter case, consider the example mentioned in Remark
\ref{rem:Rem2}. For \emph{diagonal} (time-varying) channels, the DoF
of a $K\text{-user}$ MIMO network ($M=N$ antennas at each node) is
$K/2$ times the number of DoF achieved by each user in the absence
of interference \cite{85}. Note that for the corresponding
polynomial system, $N_e>N_v$. Although interference networks with
diagonal channels are improper (because $N_e>N_v$), the interference
alignment is feasible since the diagonal channels obviously bring a
structure to the polynomial system, which leads the system to
solvability.

\section{Conclusion}\label{sec:Conclusion}

In this paper, we explore the feasibility of interference alignment
through beamforming in MIMO interference networks. Accordingly, we
consider the alignment problem for an interference network as the
solvability of its corresponding multivariate polynomial system.
Ideally, we would like to find the conditions that would show the
direct link between the feasibility (infeasibility) of an
interference network and the solvability (non-solvability) of its
corresponding polynomial system. For single beam cases, our results
indicate that the solvability of corresponding polynomial systems is
based on counting the number of equations and variables in the
polynomial systems. We support our intuition by providing numerical
results for a variety of cases, by presenting closed form solutions
for new systems, and by providing rigorous proofs for some important
cases.

On the other hand, for multi-beam cases, the current advancements in
algebraic geometry are insufficient to prove the solvability of
corresponding polynomial systems. Based on numerical results, we
show that the connection between feasible and proper systems can be
further strengthened by including information theoretic (general and
cooperative) outer bounds to our proper system condition. In
addition, based on numerical results, we also observe that if the
system is improper, then it is infeasible.

\section*{Appendix\\Genericity}\label{sec:Appendix}

The term genericity in algebraic geometry has a mathematical
explanation, which is beyond our scope (see ``generic property" from
Wikipedia). In Chapter 7 of \cite{94}, the proof that the polynomial
system ${f_1=0,\cdots,f_n=0}$ has \emph{generically}
$\textrm{MV}(P_1,\cdots,P_n)$ number of common solutions is shown by
induction on $n$, which denotes the dimension of a polynomial
system. Here, we will show that genericity implies independent
random coefficients, which matters for our scope in this paper. For
this purpose, we first start with the definition of coefficient
polynomial. Note that mathematics literature does not directly and
simply present the genericity in the matter of our scope as we
present in this Appendix.
\par Let $p(c)$ denote the coefficient polynomial, which is
dependent on the coefficients of polynomials $f_1,\cdots,f_n$, where
\begin{equation*}
c\subset C=\{c_{ij}|~ i \in \{1,\cdots,n\},j \in \{1,\cdots,m_i\}\},
\end{equation*}
is the subset of all coefficients of polynomial system. Next, we
define the term algebraic independence of coefficients, which is
originally related to the term algebraic independence in algebraic
geometry.

\definition Let $c$ denote the subset of all coefficients of polynomials
$f_1,\cdots,f_n$. The subset is called algebraically dependent if
\emph{there is a} coefficient polynomial satisfying the equality
$p(c)=0$. Otherwise, it is called algebraically independent.

\example Consider the polynomial ${f=c_1x^2+c_2x+c_3}$.

\begin{itemize}
\item $c=\{c_1\}$ is algebraically dependent if there is a coefficient polynomial satisfying $p(c_1)=0$, e.g.,
${p(c_1)=c_1^2+2c_1}=0$. That is, $c_1$ is not transcendental.
\item $c=\{c_1,c_2\}$ is algebraically dependent if there is a coefficient polynomial satisfying $p(c_1,c_2)=0$,
e.g., $p(c_1,c_2)=c_1^2+c_2=0$.
\end{itemize}

If a polynomial is equal to zero, it is also called a ``vanishing
polynomial" in mathematics, e.g., $p(c)=0$. \par Next, we define
\emph{genericity}, which we rephrase from the definitions 5.6 and
5.3 in Chapter 3 and 7 of \cite{94}, respectively. Note once again
that Bezout's and Bernshtein's theorems give the exact number of
common solutions when the coefficients are generic.

\definition \label{def:generically} A property is said to hold generically for the polynomials $f_1,\cdots,f_n$ if \emph{there is a} coefficient polynomial $p(c)$ such that the nonvanishing of $p(c)$ implies that this property holds.

Intuitively, this definition means that the property for all
polynomials holds for \emph{most} of the coefficients; that is, for
those coefficients satisfying $p(c)\neq0$.

\example\label{ex:generic} Consider $f=c_1x^2+c_2x+c_3=0$, which has
a mixed volume 2. One can claim that the property ``$f$ has two
(distinct) solutions" holds generically. To prove this, we must find
a coefficient polynomial, whose nonvanishing implies this desired
property. The condition is easily seen to be the nonvanishing
discriminant of polynomial,
$p(c)=\textrm{Disc}(f)=c_1(c_2^2-4c_1c_3)\neq0$, which is satisfied
always if the set of all coefficients is algebraically independent.
\emph{Some cases do not require the set of all coefficients to be
algebraically independent}. For example, for the same polynomial,
one can also claim that the property ``$f$ has two solutions with
multiplicities counted" (i.e., the solutions may not be distinct
this time) holds generically. The coefficient polynomial
$p(c)=c_1c_3\neq0$ implies this desired property. As a result, we
briefly say that $f$ has \emph{generically} 2 solutions.

As mentioned before, the proof that the polynomial system has
generically $\textrm{MV}(P_1,\cdots,P_n)$ number of common solutions
is shown by induction on $n$ in \cite{94}. We leave the further
details to be inquired in \cite{94}.

As a result, based on Definition \ref{def:generically}, we can
simply argue that independent random coefficients are almost surely
generic since the set of independent random coefficients is
algebraically independent; that is, $p(c)\neq0$ almost surely, where
$c$ is the set of independent random coefficients.

The proof that the genericity in Bernshtein's theorem implies
independent random coefficients is also shown from an algebraic
point of view in Section 2 of \cite{97}.

\section*{Acknowledgment}
The first author would like to thank V. R. Cadambe at University of
California Irvine for valuable discussions on interference
alignment.
\bibliographystyle{IEEEtran}
\bibliography{IEEEabrv,IEEEfull}
\end{document}